\pgfplotsset{compat = newest}
\def\@makefnmark{% 
  \leavevmode
  \raise.9ex\hbox{\fontsize\sf@size\z@\normalfont\tiny\@thefnmark}}
\newcommand{\prob}[1]{\operatorname{\mathbb P}\left(#1\right)}
\newcommand{\ev}[1]{\operatorname{\mathbb E}\left(#1\right)}
\theoremstyle{remark}
\newcommand{\dist}[1]{\mathrm{dist}\left(#1\right)}
\newcommand*{\rom}[1]{(\expandafter{\romannumeral #1\relax})}
\title{A Natural Adaptive Process for\\ Collective Decision-Making} 
\author{Florian Brandl\\\normalsize Department of Economics\\[-1ex]\normalsize University of Bonn% 
\and 
Felix Brandt% 
\\\normalsize Department of Computer Science\\[-1ex]
\normalsize Technical University of Munich% 
}
\renewcommand{\epsilon}{\varepsilon}
\newcommand{\p}[1][]{% 
\ifthenelse{\equal{#1}{}}{{R}}{{R_{#1}}}% 
}
\newcommand{\s}[1][]{% 
\ifthenelse{\equal{#1}{}}{{\succ}}{{\succ_{#1}}}% 
}
\newcommand{\srel}[1][]{% 
\ifthenelse{\equal{#1}{}}{\succ}{\succ_{#1}}% 
}
\newcolumntype{P}[1]{>{\centering\arraybackslash}p{#1}}
\newcolumntype{Y}{>{\centering\arraybackslash}X}
\begin{document}

\maketitle

\begin{abstract}

Consider an urn filled with balls, each labeled with one of several possible collective decisions. Now, let a random voter draw two balls from the urn and pick her more preferred as the collective decision. Relabel the losing ball with the collective decision, put both balls back into the urn, and repeat. 
Once in a while, relabel a randomly drawn ball with a random collective decision. 
We prove that the empirical distribution of collective decisions produced by this process approximates a maximal lottery, a celebrated probabilistic voting rule proposed by Peter C. Fishburn (Rev. Econ. Stud., 51(4), 1984). In fact, the probability that the collective decision in round $n$ is made according to a maximal lottery increases exponentially in $n$.
The proposed procedure is more flexible than traditional voting rules and bears strong similarities to natural processes studied in biology, physics, and chemistry as well as algorithms proposed in machine learning.

\end{abstract}

\section{Introduction}

\label{sec:intro}

The question of how to collectively select one of many alternatives based on the preferences of multiple agents has occupied great minds from various disciplines. Its formal study goes back to the Age of Enlightenment, in particular during the French Revolution, and the important contributions by Jean-Charles de Borda 
and Marie Jean Antoine Nicolas de Caritat, 
better known as the Marquis de Condorcet. Borda and Condorcet agreed that plurality rule---then and now the most common collective choice procedure---has serious shortcomings. This observation remains a point of consensus among social choice theorists and is largely due to the fact that plurality rule merely asks each voter for her most-preferred alternative \citep[see, e.g.,][]{BrFi02a,Lasl11a}.\footnote{For example, plurality rule may select an alternative that an overwhelming majority of voters consider to be the worst of all alternatives.} When eliciting more fine-grained preferences such as complete rankings over all alternatives from the voters, much more attractive choice procedures are available. As a matter of fact, since \citeauthor{Arro51a}'s (\citeyear{Arro51a}) seminal work, the standard assumption in social choice theory is that preferences are given in the form of binary relations that satisfy completeness, transitivity, and often anti-symmetry. Despite a number of results which prove critical limitations of choice procedures for more than two alternatives \citep[e.g.,][]{Arro51a,Gibb73a,Satt75a}, there are many encouraging results \citep[e.g.][]{Youn74a,YoLe78a,BrFi78a,Lasl00a}. In particular, when allowing for randomization between alternatives, some of the traditional limitations can be avoided and there are appealing choice procedures that stand out \citep{Gibb77a,Bran13a,BrBr17a}.

The standard framework in social choice theory rests on a number of rigid assumptions that confine its applicability: there is a fixed set of voters, a fixed set of alternatives, and a single point in time when preferences are to be aggregated; all voters are able to rank-order all alternatives; there is a central authority that collects all these rankings, computes the outcome, and convinces voters of the outcome's correctness, etc. On top of that, computing the outcome of many attractive choice procedures is a demanding task that requires a computer, which can render the process less transparent to voters.\footnote{In some cases, computing the outcome was even shown to be NP-hard, i.e., 
the running time of all known algorithms for computing election winners increases exponentially in the number of alternatives \citep[see, e.g.,][]{BTT89b,BCE+14a}.}

In this paper, we devise an ongoing process in which voters may arrive, leave, and change their preferences over time and 
collective decisions are made repeatedly at intervals.
Voters are never asked for their complete preference relations, but rather reveal minimal information about their preferences by choosing between two randomly drawn alternatives from time to time. No central voting authority is required. The process can be executed via a simple physical device: an urn filled with balls that allows for two primitive operations: \emph{(i)} randomly sampling a ball and \emph{(ii)} replacing a sampled ball of one kind with a ball of another kind. 
More precisely, the process works as follows (see \Cref{fig:process}). There is an urn filled with balls that each carry the label of one alternative. The initial distribution of balls in the urn is arbitrary. In each round, a randomly selected voter will draw two balls from the urn at random. Say these two balls are labeled with alternatives 1 and 2, and the voter prefers 1 to 2. She will then change the label of the second ball to 1 and return both balls to the urn.
Alternative 1 is declared the collective choice---or winner---of this round. 
After each round, with some small probability $r$ which we call \emph{mutation rate}, a randomly drawn ball is relabeled with a random alternative.

\begin{figure}[tb]
	\centering
	\parbox[t]{0.54\textwidth}{\centering\footnotesize
	\includegraphics[width=0.55\textwidth]{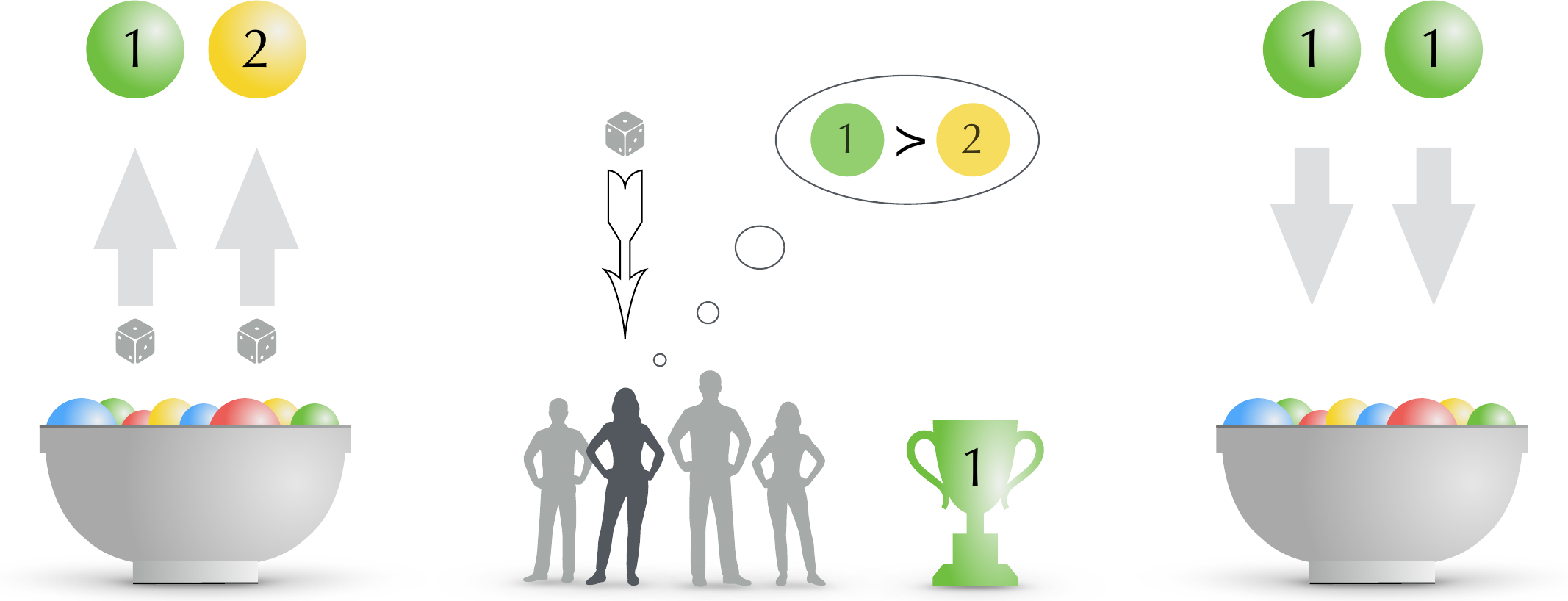}\\
	\emph{(i)} Draw two balls, random voter picks winner,\\ relabel losing ball.
	}
	\parbox[t]{0.45\textwidth}{\centering\footnotesize
	\includegraphics[width=0.55\textwidth]{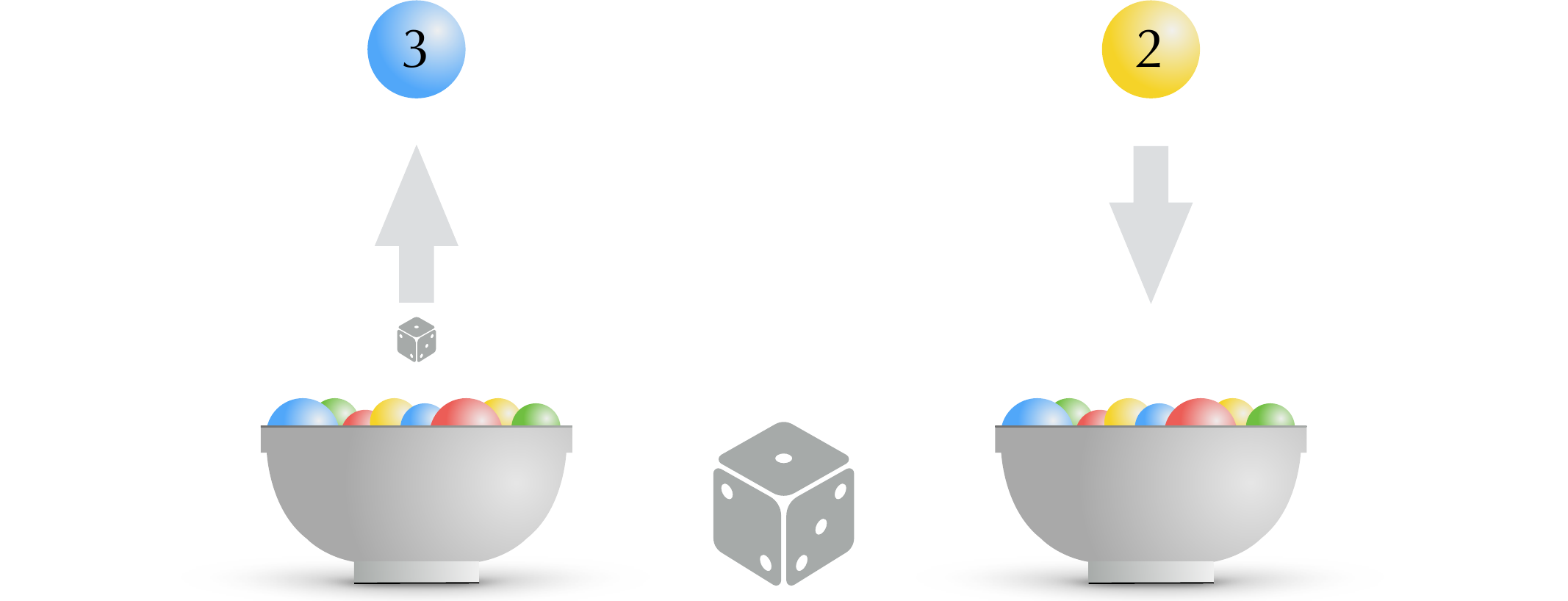}\\ 
	\emph{(ii)} Randomly relabel one drawn ball with small probability $r$ (mutation).
	}
	\\\bigskip
	\parbox[t]{0.55\textwidth}{\centering\footnotesize
	\includegraphics[width=0.55\textwidth]{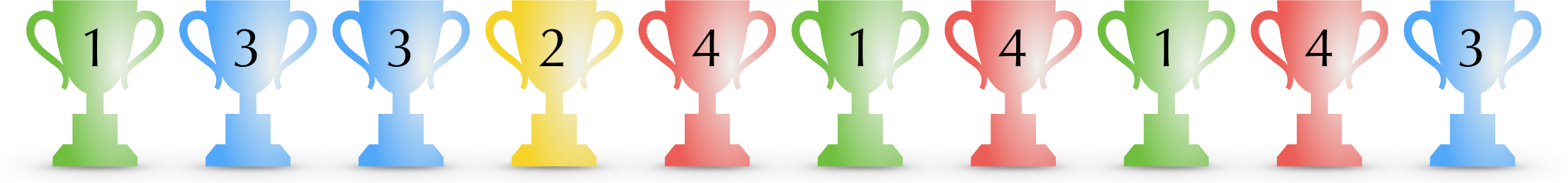}\\
	\emph{(iii)} The distribution of winners converges to a maximal lottery w.r.t.~the voters' preferences.
	}
	\caption{Illustration of one round of the urn process (\emph{i} and \emph{ii}) and the main result (\emph{iii}).}
	\label{fig:process}
\end{figure}

We show that if the number of balls in the urn is sufficiently large, then the limit of the empirical distribution of winners is almost surely close to a \emph{maximal lottery}---a randomized extension of the Condorcet principle that was proposed by \citet{Fish84a} and enjoys many desirable axiomatic properties.
How far the limiting distribution will be from a maximal lottery depends on $r$.
As $r$ goes to 0, the limiting distribution converges to a maximal lottery.
We can, however, not set $r$ to 0 as then almost surely, all alternatives except one will permanently disappear from the urn and the limiting distribution will be degenerate. Our proof not only shows convergence of the limiting distribution but also that the probability that the urn distribution itself is close to a maximal lottery gets arbitrarily close to 1 and increases exponentially in the number of rounds. The winners of most rounds are thus selected according to approximate maximal lotteries.

\subsection{Maximal Lotteries, Dynamic Voting, and Approximate Axiomatics}

The basic idea of maximal lotteries is to avoid the Condorcet paradox---which lies at the heart of classic impossibility theorems---by extending the notion of a Condorcet winner to lotteries.
A lottery $p$ is a randomized Condorcet winner---or \emph{maximal}---if for any other lottery $q$, a random voter is more likely to prefer the alternative sampled from $p$ to that sampled from $q$ than \emph{vice versa}.\footnote{This comparison of lotteries induces a binary relation on lotteries whose maximal elements are precisely the maximal lotteries.}
The minimax theorem guarantees that maximal lotteries exist. 
Maximal lotteries also have a natural interpretation in terms of electoral competition \citep[see, e.g.,][]{Myer93a,Lasl00b,CaOk07a}.
In fact, maximal lotteries are precisely the mixed Nash equilibrium (or maximin) strategies of the symmetric two-player zero-sum game given by the pairwise majority margins of the voters' preferences. When interpreting the two players as parties and the alternatives as possible positions of the parties, this can be seen as a game of electoral competition in which two parties aim at maximizing the number of voters who prefer their (mixed) position to that of the other party. For this reason, the social choice literature sometimes refers to the support of maximal lotteries as the \emph{bipartisan set} (a term proposed by Roger Myerson).

Maximal lotteries are known to satisfy a number of desirable properties that are typically considered in social choice theory \citep[see, e.g.,][]{FeMa92a,Lasl00a,RiSh10a,Hoan17a,BBS20a}. 
For example, Condorcet winners (i.e., alternatives that defeat every other alternative in a pairwise majority comparison) will be selected with probability 1, and Condorcet losers (i.e., alternatives that are defeated in all pairwise majority comparisons) will never be selected. No group of voters benefits by abstaining from an election, removing losing alternatives does not affect maximal lotteries, and each alternative's probability is unaffected by cloning other alternatives. Maximal lotteries have been axiomatically characterized using Arrow's independence of irrelevant alternatives and Pareto efficiency \citep{BrBr17a} as well as population-consistency and composition-consistency \citep{Bran13a}. 
The dynamic procedure described above implements maximal lotteries while providing 
\begin{itemize}
	\item \emph{myopic strategyproofness} within each round,
	\item minimal \emph{preference elicitation} and, thus, increased privacy protection,
	\item \emph{verifiability} realized via a simple physical procedure, and
	\item all-round \emph{flexibility}.
\end{itemize}

\noindent\emph{Myopic strategyproofness}: Each round's decision is made by letting a randomly selected voter choose between two alternatives. Clearly, a voter who is only concerned with the outcome of the current round is best off by choosing the alternative that she truly prefers. If she also takes into account the outcomes of future rounds, however, she may be able to skew the distribution in the urn by choosing alternatives strategically.\footnote{Maximal lotteries, like any \emph{ex post} Pareto efficient randomized choice procedure other than random dictatorships, fail to be strategyproof \citep{Gibb77a}. 
The simple notion of myopic strategyproofness could be strengthened by discounting future rounds.}
\medskip

\noindent\emph{Preference elicitation}: Eliciting pairwise preferences on an as-needed basis has several advantages. First, it spares the voters from the cognitive burden of having to rank-order all alternatives at once. If the number of voters is large, it may well be possible that the urn process yields satisfying results without ever querying some of the voters. Secondly, rather than submitting a complete ranking of all alternatives to a trusted authority, voters only reveal their preferences by making pairwise choices from time to time.\footnote{Privacy can be further increased by letting voters draw their balls privately, announce the winner, and put two balls of the same color back into the urn, without revealing the original color of the losing ball.
Alternatively, the voters' preferences can be protected completely by letting the voter publicly draw both balls, make one copy of each ball, and let her privately put back two balls of her choice. The collective decision in each round can then be made by drawing a random ball from the urn. 
}
\medskip

\noindent\emph{Verifiability}: Previously, the deployment of maximal lotteries required that a central authority collects the preferences of all voters, computes a maximal lottery by solving a linear program, and instantiates the lottery in some user-verifiable way. The urn process allows to achieve these goals via a simple physical device. 
\medskip

\noindent\emph{Flexibility}: The urn process is oblivious to changes in the voters' preferences, the set of voters as well as the set of alternatives. Everything that has happened up to the current round is irrelevant. Since the process converges from any initial configuration, it will keep ``walking in the right direction'' (towards a maximal lottery of the current preference profile). If the preferences change slowly in the sense that only a small fraction of voters changes their preferences from one round to the next, collective choices will thus be made according to a maximal lottery for the current preferences in most rounds. This includes the case when the distribution of preferences converges.

\medskip

We also note some disadvantages of the urn process.
The convergence of the distribution of winners to an approximate maximal lottery is an asymptotic result. 
In particular, for a finite number of rounds, there is a non-zero probability that the chosen alternative is subpar for a significant fraction of rounds, for example, because it is Pareto dominated. 
To bound this probability below an acceptable threshold, it may be necessary to run the process for an excessively large number of rounds. 
Second, ensuring that the limit distribution is sufficiently close to a maximal lottery could require an urn with a large number of balls.
We address the first concern by showing that the probability for the distribution of winners to be far from the limit distribution converges to 0 exponentially fast in the number of rounds. 
The rate of convergence is also evident in computational simulations we ran for various parameterizations of the process.
When the preference profile admits a Condorcet winner, we can give tractable bounds on the number of balls in the urn required to achieve a good approximation in the limit.
This partially mitigates the second concern since it has been observed that most real-world preference profiles admit Condorcet winners \citep[see, e.g.,][]{GeLe11a}.
 
The axiomatic characterizations of maximal lotteries not only imply that maximal lotteries satisfy desirable axioms, but also that any deviation from maximal lotteries leads to a violation of at least one of the axioms.
Hence, a process that only guarantees an approximation of a maximal lottery will not enjoy the same axiomatic properties.
However, rather than insisting on stringent axioms, one can relax them by only requiring them to hold in an approximate sense. For example, a natural notion of approximate Condorcet-consistency would require that a Condorcet winner receives probability close to 1 whenever one exists.
Since the empirical distribution of winners according to our process is almost surely close to a maximal lottery and maximal lotteries are Condorcet-consistent, the process is approximately Condorcet-consistent in the above sense.
More generally, approximate maximal lotteries satisfy approximate versions of many of the axioms enjoyed by maximal lotteries such as population-consistency, composition-consistency, agenda-consistency, and efficiency.
This follows from the fact that the correspondence returning the set of maximal lotteries depends continuously on the underlying preference profile and we show this exemplarily for population-consistency in \Cref{sec:approximateaxiomatics}.

Maximal lotteries have been repeatedly recommended for practical use \citep{FeMa92a,RiSh10a,Bran13a,Hoan17a}. 
We believe that the benefits of the urn process described above extend the applicability of maximal lotteries. 
Rather than for traditional political elections, probabilistic rules like maximal lotteries seem more suitable for 
frequently repeated low-stakes elections where some degree of randomization may not only be tolerable but even desirable.
Two example applications that have been suggested for maximal lotteries are to help a group of coworkers with the daily decision where to have lunch and to select music for a party or a radio station based on the preferences of the listeners \citep{Bran13a}. The transparency and the flexibility of the urn process seem particularly effective in the music broadcasting example. Agents come and go, they only need to select from a pair of songs rather than rank-order all of them, and individual preferences, as well as the set of available songs, can be changed at any time. Our theorem shows that the sequence of simple pairwise choices results in a socially desirable distribution of songs:
the more songs are being played, the less likely it becomes that another distribution of songs would have been preferred by an expected majority of listeners. 
It is plausible that, over time, the preferences of the listeners change depending on the songs that have been played so far. These changes will be reflected immediately in the selection of future songs.

\subsection{Applications Beyond Collective Decision-Making}

Interestingly, dynamic processes similar to the process we describe here have recently been studied in population biology, quantum physics, chemical kinetics, and plasma physics to model phenomena such as the coexistence of species, the condensation of bosons, the reactions of molecules, and the scattering of plasmons. In each of these cases, simple interactions between randomly sampled entities result in distributions that correspond to equilibrium strategies of symmetric zero-sum games. Since the definition of maximal lotteries and our dynamic process merely rely on this comparison matrix, describing with which probability one entity will be replaced with another in a pairwise encounter, our results are also of relevance to these areas. We discuss these connections, as well as those to equilibrium learning and evolutionary game theory, in more detail in \Cref{sec:relatedwork}.

An alternative interpretation of our result can be used to describe the formation of opinions. In this model, there is a population of agents, each of whom entertains one of many possible opinions. Agents come together in random pairwise interactions, in which they try to convince each other of their opinion. 
The probabilities with which one opinion beats another are given as a square matrix and, with some small probability, an agent randomly changes her opinion. In other words, the agents correspond to the balls in the urn, the opinions correspond to the alternatives, and there are neither voters nor preference profiles as transition probabilities are given explicitly.
Our theorem then shows that, if the population is large enough, the distribution of opinions within the population is close to a maximal lottery of the probability matrix most of the time.
Other models of opinion formation based on different processes were, for example, considered by \citet{Degr74a}, \citet{HoLi75a}, and \citet{GoLe14a}.

The process we describe approximately computes a mixed Nash equilibrium of a symmetric zero-sum game. This problem is known to be equivalent to linear programming. In fact, deciding whether an action is played with positive probability in an equilibrium of a symmetric zero-sum game is P-complete
\citep[][Theorem 5]{BrFi08b}, which, loosely speaking, means that the problem is at least as hard as any problem that can be solved in polynomial time. The urn process can thus be seen as a probabilistic algorithm that approximates polynomial-time computable functions. In contrast to traditional computing devices such as Turing machines, the urn process is based on unordered elementary entities that randomly interact according to very simple replacement rules.\footnote{Related decentralized models of computation with applications to sensor networks and molecular computing are studied under the name ``population protocols'' in computer science \citep[e.g.,][]{AAD+06a,AsRu09a}. While the urn process has the same \emph{modus operandi} as population protocols, the input-output behavior is different. The input of population protocols is given by the initial distribution of balls in the urn and the output has been reached if all balls belong to a certain subset of types. By contrast, the input for our urn process is encoded in the matrix describing the replacement rules and the (approximate) output is given by the distribution of balls in the urn after sufficiently many rounds.}

\medskip

The remainder of the paper is structured as follows. After defining our model in \Cref{sec:model}, we state the main result (\Cref{thm:main}) and a rough proof sketch in \Cref{sec:results}. The full proof is given in the Appendix. In \Cref{sec:cond}, we analyze the instructive special case of preference profiles that admit a Condorcet winner, which allows for a more elementary proof. 
In \Cref{sec:relatedwork}, we extensively discuss the connections between our work and results from equilibrium learning, evolutionary game theory, and the natural sciences. We also state a continuous version of our main result (\Cref{thm:continuous}) that may be of independent interest.
In \Cref{sec:approximateaxiomatics}, we show in which sense the axiomatic properties of maximal lotteries can be retained for approximations thereof.

\section{The Model}\label{sec:model}

Let $[d]=\{1,\dots,d\}$ be a set of alternatives and $\Delta$ the $d-1$-dimensional unit simplex in $\mathbb R^d$, that is, $\Delta = \{x \in\mathbb R_{\ge 0}^d\colon \sum_{i=1}^d x_i = 1\}$.
We refer to elements of $\Delta$ as lotteries.
By $\mathbb N = \{1,2,\dots\}$ and $\mathbb N_0 = \mathbb N\cup\{0\}$ we denote the sets of positive and non-negative integers, respectively.
Throughout the paper, for a vector $x\in\mathbb R^k$ for some $k$, $\lvert x\rvert = \sum_{l = 1}^k \lvert x_l\rvert$ denotes its $L^1$-norm.
For $\delta > 0$ and $S\subset\mathbb R^d$, let $B_\delta(S) = \{x\in\Delta\colon \lvert x - y\rvert < \delta \text{ for some } y\in S\}$ be the $\delta$-ball around $S$.
For a finite set $S$, we write $\lvert S\rvert$ for the number of elements of $S$.

A \emph{preference relation} $\s$ is an asymmetric binary relation over $[d]$.\footnote{Preferences need not be transitive or complete. The definition of maximal lotteries and the urn process we describe only depend on the fractions of voters who prefer one alternative to another. In particular, indifferences can easily be accommodated by randomly selecting which of the two balls will be relabelled.} 
By $\mathcal R$ we denote the set of all preference relations.
Let $V$ be a finite set of voters.
A \emph{preference profile} $\p\in\mathcal R^V$ specifies a preference relation for each voter.
With each preference profile $\p$, we can associate a comparison matrix $M_{\p}\in[0,1]^{d\times d}$ that states for each ordered pair of alternatives the fraction of voters who prefer the first to the second. 
That is, $M_{\p}(i,j) =  \lvert\{v\in V\colon i \srel[v] j\}\rvert / \lvert V\rvert$.
This matrix induces a skew-symmetric matrix $\tilde M_{\p} = M_{\p} - M_{\p}^\intercal$, which we call the skew-comparison matrix.\footnote{A matrix $M$ is skew-symmetric if $M = -M^\intercal$.}

\subsection{Maximal Lotteries}\label{sec:ml}

A lottery $p\in\Delta$ is a \emph{maximal lottery} for a profile $\p$ if $\tilde M_{\p}\,p \le 0$. The minimax theorem implies that every profile admits at least one maximal lottery.
By $\ml(\p)$ we denote the set of all lotteries that are maximal for $\p$. Most profiles admit a unique maximal lottery. For example, when the number of voters is odd and voters have strict preferences, there is always a unique maximal lottery \citep{LLL97a}. 

\begin{example}[Condorcet winner]\label{ex:condorcetwinner}
Consider, for example, 900 voters, three alternatives, and a preference profile $\p$ given by the following table. Each column header contains the number of voters with the corresponding preference ranking.
\[
\begin{tabular}{ccc}
$300$ & $300$ & $300$\\
\midrule
$1$ & $1$ & $2$\\
$2$ & $3$ & $3$\\
$3$ & $2$ & $1$\\
\end{tabular}
\]
Then,
\[
M_{\p} =
\begin{pmatrix}
0 & \nicefrac23 & \nicefrac23 \\
\nicefrac13 & 0 & \nicefrac23 \\
\nicefrac13 & \nicefrac13 & 0\\
\end{pmatrix}
\quad\text{and}\quad
\tilde M_{\p} =
\begin{pmatrix}
\phantom{-}0 & \phantom{-}\nicefrac13 & \nicefrac13\\
-\nicefrac13 & \phantom{-}0 & \nicefrac13\\
-\nicefrac13 & -\nicefrac13 & 0\\
\end{pmatrix}\text.
\]
The set of maximal lotteries $\ml(\p)=\{ (1,0,0)^\intercal \}$ only contains the degenerate lottery with probability 1 on the first alternative. This alternative is a \emph{Condorcet winner}, i.e., an alternative that is preferred to every other alternative by some majority of voters.
\end{example}

\subsection{Markov Chains}\label{sec:markovchains}

Let $S$ be a finite set and $\{X(n)\colon n \in\mathbb N_0\}$ be a discrete-time, time-homogeneous \emph{Markov chain} with state space $S$.
The \emph{transition probability matrix} $P\in[0,1]^{S\times S}$ of $\{X(n)\colon n \in\mathbb N_0\}$ is given by
\begin{align*}
	P(p,p') = \prob{X(n+1) = p'\mid X(n) = p}
\end{align*}
for all $p,p' \in S$.
We will frequently write $X(n,p_0)$ for $X(n)$ conditioned on $X(0) = p_0\in S$ and call $p_0$ the \emph{initial state}.  

The period of a state $p\in S$ is the greatest common divisor of the return times with positive probability $\{n\in\mathbb N\colon (P^n)(p,p) > 0\}$. 
A Markov chain is \emph{aperiodic} if every state has period 1. 
Note that any Markov chain with $P(p,p) > 0$ for all $p\in S$ is aperiodic.
A Markov chain is \emph{irreducible} if every state is reached from any other state with positive probability.
That is, for any two states $p,p'\in S$, there is a positive integer $n$ so that $(P^n)(p,p') > 0$. 
If $\{X(n)\colon n \in \mathbb N_0\}$ is irreducible and aperiodic, it has a unique \emph{stationary distribution} $\pi\in\Delta S$ so that $\pi^\intercal = \pi^\intercal P$.

\subsection{The Urn Process}\label{sec:urnprocess}

Consider an urn with $N\in\mathbb N$ balls, each labeled with some alternative.
Viewing balls with the same label as indistinguishable, we can identify each state of the urn with an element of the discrete unit simplex $\Delta^{(N)} = \{p\in\Delta\colon Np\in\mathbb N_0^d\}$.
Fix a \emph{mutation rate} $r\in[0,1]$.

We are interested in a Markov chain with state space $\Delta^{(N)}$ that can be informally described as follows.
First, we flip a coin that has probability $1-r$ of landing heads. 
If the coin shows heads, we choose one voter $v\in V$ uniformly at random and ask the voter to draw two balls from the urn. Say these two balls are labeled with alternatives 1 and 2. If $1 \succ_v 2$, the label of the second ball is changed to 1. Likewise, if $2 \succ_v 1$, the first ball is relabeled with label 2. If both balls carry the same label, the labels remain unchanged. 
If the coin shows tails, we draw a single ball from the urn, relabel it with an alternative chosen uniformly at random, and put it back into the urn.

This description of the process assumes that two alternatives are sampled from the urn distribution \emph{without} replacement.
For the formal description, we will assume that drawing is \emph{with} replacement.
This corresponds to sampling one alternative by drawing one ball, putting the ball back into the urn, and sampling a second alternative by again drawing one ball (which may be the same as the first).\footnote{In practice, it would the be infeasible to relabel the first drawn ball since it is ``lost'' in the urn after putting it back. However, one could modify the process by relabeling the second ball if the voter prefers the first sampled alternative and change nothing otherwise. This eliminates the factor of 2 in the transition probabilities below but does not change the results.}
Doing so avoids a lot of clumsy notation.
If the number of balls in the urn is large, there is no significant difference between drawing with and without replacement.
In the proof, we point out why the same arguments also carry through with minor adaptations for drawing without replacement.  

We define a transition probability matrix $P^{(N,r)}$ that specifies for every pair of states the probability that the distribution of the urn transitions from the first to the second.
Denote by $e_i$ the $i$th unit vector in $\mathbb N_0^d$.
For $p\in \Delta^{(N)}$ and $i,j\in[d]$ with $p' = p + \frac{e_i}N - \frac{e_j}N\in \Delta^{(N)}$, let 
\begin{align*}
	P^{(N,r)}(p,p') =
	\begin{cases}
		\displaystyle (1-r)2p_ip_jM_{\p}(i,j) + \frac rd p_j\qquad&\text{if }i \neq j\\
		\displaystyle (1-r) \sum_{k = 1}^d p_k^2 + \frac rd &\text{if } i = j\\
	\end{cases}
\end{align*}
be the probability of transitioning from $p$ to $p'$.
For the remaining pairs of states $p,p'\in \Delta^{(N)}$, let $P^{(N,r)}(p,p') = 0$.
Then, $P^{(N,r)}$ has non-negative values and its rows sum to 1 so that it is a valid transition probability matrix.
For an initial state $p_0\in \Delta^{(N)}$, we consider a Markov chain $\{X^{(N,r)}(n,p_0)\colon n\in\mathbb N_0\}$ with transition probability matrix $P^{(N,r)}$.
The distribution of $X^{(N,r)}(n,p_0)$ over $\Delta^{(N)}$ is given by the row of $\left(P^{(N,r)}\right)^n$ with index $p_0$.
If $r > 0$, this Markov chain is irreducible and aperiodic (since it remains in the same state with positive probability).
It corresponds to the urn process described above when the initial state of the urn is $p_0$.

Continuing \Cref{ex:condorcetwinner}, consider an urn with $N=5$ balls and recall that $d = 3$. Then, the transition probability matrix $P^{(N,r)}$ is an $\binom{3+5-1}{5}=21$-dimensional square matrix. Let the mutation rate be $r=0.1$ and the initial state $p_0=\frac15\,(1,2,2)^\intercal$. 
The probability that one of the balls of the second type is replaced with one of the first type is 
\[P^{(5,0.1)}(p_0,\frac15\,(2,1,2)^\intercal)=0.9 \cdot \frac{4}{25} \cdot \frac{2}{3}+0.1\cdot \frac13\cdot \frac{2}{5} \sim 0.109\text.\]

\section{The Result}\label{sec:results}

We prove the following: 
\begin{quote}
	\emph{For any small enough mutation rate $r > 0$, there is a maximal lottery $p^*$ so that for any initial state $p_0$, $X^{(N,r)}(k,p_0)$ is close to $p^*$ for all but a small fraction of rounds $k$ provided that the number of balls $N$ is large enough.}
\end{quote}
More precisely, for any $\delta,\tau > 0$, there is an upper bound on the mutation rate $r_0 > 0$ so that for every $0 < r \le r_0$, there is a maximal lottery $p^*$ and a lower bound on the number of balls $N_0\in\mathbb N$ such that for every $N \ge N_0$ and every $p_0\in \Delta^{(N)}$, the fraction of rounds $k$ in which $X^{(N,r)}(k,p_0)$ is no more than $\delta$ away from $p^*$ is almost surely at least $1-\tau$.\footnote{\Cref{thm:main} implies that the stationary distribution of $X^{(N,r)}$ assigns probability at least $1-\tau$ to states that are in a $\delta$-neighborhood of $p^*$.
Conversely, this property of the stationary distribution implies \Cref{thm:main} by the ergodic theorem for Markov chains.
The proof does however not derive the above property of the stationary distribution as an intermediate step. 
It is only a by-product of the final result.
For more than two alternatives, our result is stronger than proving that the expectation of the stationary distribution, or, equivalently, the temporal average of the urn distribution, is close to a maximal lottery.}

\begin{restatable}{theorem}{main}\label{thm:main}
	Let $\delta,\tau > 0$.
	Then, there is $r_0>0$ such that for all $0< r\le r_0$, there are $p^*\in\ml(\p)$ and $N_0\in\mathbb N$ such that for all $N \ge N_0$ and $p_0\in \Delta^{(N)}$, almost surely
	\begin{align*}
		 \lim_{n\rightarrow\infty} \frac1n \left\lvert\left\{k \le n\colon \left\lvert X^{(N,r)}(k,p_0) - p^*\right\rvert \le \delta\right\}\right\rvert \ge 1 - \tau.
	\end{align*}
	Moreover, there is $C>0$ such that for all $n\in \mathbb N_0$,
	\[
	\mathbb P\left(\left\lvert X^{(N,r)}(n,p_0) - p^*\right\rvert \le \delta\right) \ge 1-\tau -e^{-\lfloor Cn\rfloor}\text.
	\]
\end{restatable}

To prove this, we approximate our discrete and stochastic urn process by a continuous and deterministic process. 
The latter can be viewed as a version of the urn process with a continuum of balls.
Using analytical tools, it can be shown that this process converges to an approximate maximal lottery for every initial state, where the approximation can be made arbitrarily precise if $r$ is made small (see \Cref{thm:continuous} in \Cref{sec:relatedwork}).
The approximation only works for a finite number of rounds (respectively, bounded time interval) and only with probability close to 1 (rather than almost surely).
However, on long enough time intervals, the deterministic process is close to an approximate maximal lottery most of the time (since it converges to such a lottery). 
Moreover, on any such interval, the deterministic process is a good approximation to the stochastic process with probability close to 1 (provided that the number of balls is large enough).
By a variant of the strong law of large numbers, it then follows that the stochastic process is close to an approximate maximal lottery for most rounds almost surely.
This is the first statement of \Cref{thm:main}.
We give a more detailed outline and a complete proof in the Appendix.
The second statement follows from the first using the standard result that the distribution of an irreducible and aperiodic Markov chain converges exponentially fast to its stationary distribution in the total variation norm.

\Cref{thm:main} is a statement about the distribution in the urn. 
Recall that the collective decision in each round is the winner of the pairwise comparison between the two drawn balls.
It is not hard to show that the empirical distribution of winners is also close to a maximal lottery.\footnote{Suppose the distribution of balls in the urn is $p\in\Delta^{(N)}$.
Then the probability that $i\in[d]$ is the collective decision is
\begin{align*}
	w_i = p_i \Big(p_i + 2\sum_{j\neq i} M_{\p}(i,j) p_j\Big)
	= p_i \Big(p_i +  \sum_{j\neq i} (\tilde M_{\p}(i,j) + 1)p_j\Big) = p_i \Big(1 + \tilde M_{\p}p\Big)
\end{align*}
where we used that $2M_{\p}(i,j) = \tilde M_{\p}(i,j) + 1$, $\sum_{j\in[d]} p_j = 1$, and $M_{\p}(i,i) = 0$.
If $p^*$ is a maximal lottery and $\lvert p - p^*\rvert\le \delta$, then $(\tilde M_{\p}p)_i \le \delta$ for all $i\in[d]$.
Hence, $w_i \in [p_i - \delta, p_i + \delta]$ for all $i$, so that $\lvert w - p^*\rvert \le (d+1)\delta$.
For every $\delta' > 0$, choosing $\delta = \tau = \frac{\delta'}{2(d+1)}$ in \Cref{thm:main} thus shows that the empirical distribution of collective decisions is almost surely no more than $\delta'$ away from $p^*$.}

Another straightforward corollary of \Cref{thm:main} is that the temporal average of the urn distribution is almost surely close to a maximal lottery (provided that $r$ is small and $N$ is large).
Let
\begin{align*}
	Z^{(N,r)}(n,p_0) = \frac1{n}\cdot\sum_{k = 0}^{n-1} X^{(N,r)}(k,p_0)
\end{align*}
be the temporal average of $X^{(N,r)}(k,p_0)$ over the first $n$ rounds.
Then, we have the following.

\begin{restatable}{corollary}{cormain}\label{cor:main}
	Let $\delta > 0$.
	Then, there is $r_0 > 0$ such that for all $0 < r \le r_0$, there are $p^*\in\ml(\p)$ and $N_0\in\mathbb N$ such that for all $N\ge N_0$ and $p_0\in \Delta^{(N)}$,
	\begin{align*}
			\prob{\left\lvert\lim_{n\rightarrow\infty}Z^{(N,r)}(n,p_0) - p^*\right\rvert \le \delta} = 1
	\end{align*}
\end{restatable}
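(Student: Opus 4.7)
The plan is to derive \Cref{cor:main} directly from \Cref{thm:main} combined with the ergodic theorem for Markov chains. Since $r>0$, the chain $\{X^{(N,r)}(n,p_0)\colon n\in\mathbb N_0\}$ is irreducible and aperiodic on the finite state space $\Delta^{(N)}$ and hence admits a unique stationary distribution $\pi\in\Delta(\Delta^{(N)})$. The ergodic theorem then gives two facts simultaneously: $Z^{(N,r)}(n,p_0)$ converges almost surely to the barycenter $\bar\pi := \sum_{p\in\Delta^{(N)}} \pi(p)\,p$, and the almost sure limit of any sojourn time $s_n^{(N,r)}(S')$ equals $\pi(S')$. Thus the whole task reduces to showing that $\bar\pi\in B_\delta(\ml(\p))$.

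The first concrete step is to invoke \Cref{thm:main} with the tightened parameters $\delta' = \delta/3$ and $\tau = \delta/4$. This delivers an $r_0>0$ and, for every $0<r\le r_0$, an $N_0$ such that for all $N\ge N_0$ and all $p_0$, the limit of $s_n^{(N,r)}(B_{\delta/3}(\ml(\p)))$ is almost surely at least $1-\delta/4$. Identifying this limit with $\pi(B_{\delta/3}(\ml(\p)))$ gives $\pi(B_{\delta/3}(\ml(\p))) \ge 1 - \delta/4$.

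The remaining step turns this lower bound on $\pi$ into a statement about $\bar\pi$ via a convexity argument. Set $\alpha = \pi(B_{\delta/3}(\ml(\p)))$ and decompose
\begin{align*}
\bar\pi = \alpha u + (1-\alpha)\, v, \qquad u = \frac{1}{\alpha}\sum_{p\in B_{\delta/3}(\ml(\p))} \pi(p)\,p, \qquad v = \frac{1}{1-\alpha}\sum_{p\notin B_{\delta/3}(\ml(\p))} \pi(p)\,p,
\end{align*}
with the second term suppressed when $\alpha=1$. Since $\ml(\p) = \{p\in\Delta\colon \tilde M_{\p} p \le 0\}$ is convex, so is its $L^1$-neighborhood $B_{\delta/3}(\ml(\p))$, and hence $u \in B_{\delta/3}(\ml(\p))$ lies within $\delta/3$ of some $u^\star\in\ml(\p)$. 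Using that the $L^1$-diameter of $\Delta$ is $2$, the triangle inequality yields
\begin{align*}
|\bar\pi - u^\star| \;\le\; \alpha\,|u-u^\star| + (1-\alpha)\,|v-u^\star| \;<\; \frac{\delta}{3} + 2\cdot\frac{\delta}{4} \;=\; \frac{5\delta}{6} \;<\; \delta,
\end{align*}
placing $\bar\pi$ in $B_\delta(\ml(\p))$ as desired.

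The argument is essentially routine once \Cref{thm:main} is available; the only substantive ingredient is the convexity of $\ml(\p)$ (and therefore of $B_{\delta/3}(\ml(\p))$), which allows the mass on $B_{\delta/3}(\ml(\p))$ to be collapsed into a single nearby point of $\ml(\p)$, while the bounded diameter of $\Delta$ absorbs the residual mass outside. The only book-keeping is choosing the parameters handed to \Cref{thm:main} so that $\tfrac{\delta}{3}+2\cdot\tfrac{\delta}{4}<\delta$, which is achieved above.
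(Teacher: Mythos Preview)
Your proof is correct and follows essentially the same approach as the paper: invoke \Cref{thm:main} with tightened parameters, identify the almost-sure limit of $Z^{(N,r)}$ with the barycenter of the stationary distribution via the ergodic theorem, and use the diameter bound $|\cdot|\le 2$ on $\Delta^{(N)}$ to control the contribution from states outside $B_{\delta'}(\ml(\p))$. The paper's argument (given just before the corollary) is terser and applies \Cref{thm:main} with $\tau\le\delta/2$ to land in $B_{2\delta}(\ml(\p))$, leaving the final halving implicit; your version is more explicit about the convexity of $\ml(\p)$ (and hence of $B_{\delta/3}(\ml(\p))$), which is indeed what makes the barycenter argument go through in the non-unique case.
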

\begin{proof}
	For some $\tau$ to be determined later, let $r_0$ and, depending on $0< r \le r_0$, $p_*$ and $N_0$ be as obtained from \Cref{thm:main}.
	By the triangle inequality, we have
	\begin{align*}
		\left\lvert Z^{(N,r)}(n,p_0) - p^*\right\rvert \le \frac1n\cdot\sum_{k = 0}^{n-1} \left\lvert X^{(N,r)}(k,p_0) - p^*\right\rvert.
	\end{align*}
	By \Cref{thm:main}, in the limit when $n$ goes to infinity, all but a $1-\tau$ fraction of the summands on the right-hand side are smaller than $\delta$.
	The remaining summands are bounded by 2.
	Hence, choosing $\tau = \frac\delta2$ gives that almost surely, $\lvert\lim_{n\rightarrow\infty} Z^{(N,r)}(n,p_0) - p^*\rvert \le 2\delta$.
	The ergodic theorem for Markov chains ensures that the limit exists.
\end{proof}

Before illustrating these results via examples, we discuss variations of the urn process.

\begin{remark}[Decoupling collective decisions]
	We assume that in each round, a collective decision is made by selecting the winner of the pairwise comparison.
	It may, however, be more practical to decouple collective decisions from the preference elicitation process and draw winners less frequently.
	For example, collective decisions could be made by drawing a random ball after any fixed number of rounds or at random times.
	\Cref{cor:main} shows that the resulting distribution would approximate a maximal lottery.	
\end{remark}

\begin{remark}[Non-uniform mutation rates]
The results still hold if we let the probability of a random mutation from alternative $i$ to alternative $j$ depend on the pair $(i,j)$. 
It suffices that every alternative in the support of a maximal lottery can escape permanent depletion via some path of mutations.
More explicitly, it suffices if for any two alternatives $i$ and $j$, there is a path of alternatives from $i$ to $j$ so that the mutation rate is positive from any alternative on the path to the next.
The proof can be adapted at the expense of more book-keeping.
\end{remark}

\begin{remark}[Mutation rate vs.\ urn size]\label{rem:smallmutationrate}
	\Cref{cor:main} shows that the temporal average of the urn distribution converges to a maximal lottery if we let $N$ go to infinity and then take $r$ to 0 (see also \Cref{thm:continuous}).
	This is in contrast to other works on evolutionary dynamics that take limits in the reverse order.\footnote{For example, \citet{FuIm08a} study imitation dynamics with mutations in symmetric two-player games (not necessarily zero-sum).
	They consider the case when the mutation rate goes to 0 for a fixed population size $N$.
	For small but positive mutation rates, the dynamics spend most of the time in degenerate states where all but a small fraction of individuals play the same strategy.  
	Letting the mutation rate go to 0 thus induces a distribution over actions. 
	Their main result determines the limit of this distribution as the population size goes to infinity.}
	While the frameworks are similar, these results are conceptually different. 
	In our model, if $r$ is too small compared to $\frac1N$, it will, in general, not be the case that the distribution in the urn is close to a maximal lottery for most rounds.
	For any long enough time interval, the distribution in the urn will for all $r$ degenerate within the interval with high probability, that is, it will only contain balls of one type.
	If $r$ is very small, it will stay in a degenerate state for a long time (compared to the chosen interval) with high probability.
	When the process leaves the degenerate state, the same will repeat itself (possibly with a different degenerate state), so that the process spends most rounds in degenerate states. As a consequence, decreasing $r$ over time does not work unless $N$ is increased as well. When increasing $N$ at an appropriate rate, the urn distribution will converge exactly to the set of maximal lotteries by \Cref{thm:main}.
\end{remark}

\begin{remark}[Majority voting]
Rather than letting only a single voter decide on the pairwise comparison between the two randomly drawn balls, it is possible to ask all voters which alternative they prefer and replace the alternative which is less preferred by a \emph{majority} of voters. This variant is equivalent to the original process for a single voter with possibly intransitive preferences (given by the majority relation of the entire population of voters) and converges to a so-called C1 maximal lottery of the preference profile \citep[see][for more information on C1 maximal lotteries]{BBS20a}.
\end{remark}

\begin{remark}[Static or growing urn]
When the initial distribution of balls in the urn is uniform and remains fixed (i.e., no balls are replaced over time), then the empirical distribution of winners converges to the lottery returned by the \emph{proportional Borda rule} \citep[see, e.g.,][]{Barb79b,Heck03a}.\footnote{The proportional Borda rule assigns to each alternative a probability that is proportional to its Borda score. For example, for one voter with lexicographic preferences over $a$, $b$, and $c$, the Borda scores are $2$, $1$, and $0$, respectively. The proportional Borda rule thus returns the lottery $(\nicefrac23, \nicefrac13,0)$.} 
This rule violates Condorcet-consistency and Pareto efficiency. It can put probability $\frac{1}{d}$ on Pareto-dominated alternatives and almost as little as $\frac{1}{d}$ on Condorcet winners for large numbers of voters \citep{BLR21b}.
When adding a new ball labeled with the winning alternative rather than replacing the losing one (i.e., the number of balls increases over time), neither the relative distribution in the urn nor the temporal average converges (see \Cref{sec:relatedwork}).
\end{remark}

\Cref{fig:ternary} (left) shows a simulation of the urn process for the preference profile and corresponding skew-comparison matrix given in \Cref{ex:condorcetwinner}. The urn process corresponds to a random walk within the shown triangle starting from the center (an almost uniform distribution). The first alternative in this profile is a Condorcet winner. From round 177 on, at least 90\% of the balls (45 of the 50) are labeled with the Condorcet winner except for three rounds. At this point, only 160 of the 900 voters were asked for their preferences.
The path is tilted to the left because a majority of voters prefer alternative 2 to alternative 3.
Note that the process only depends on the fractions of voters who prefer one alternative to another and is, thus, independent of the number of voters. Hence, if there are nine million---rather than nine hundred---voters whose preferences are distributed as in \Cref{ex:condorcetwinner}, the process could turn out exactly as shown in \Cref{fig:ternary}. In particular, the overwhelming majority of voters would never be queried for their preferences. 

\begin{figure}[tb]
	\centering
	\includegraphics{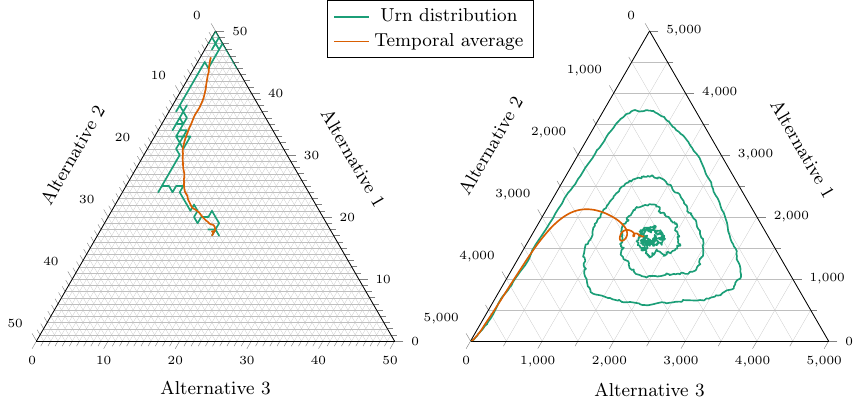}
	\caption{Simulations of the urn process.\\
	 The left diagram shows the urn process for the profile given in \Cref{ex:condorcetwinner} using an urn with $N = 50$ balls for 1,000 rounds and mutation rate $r = 0.02$, starting from an almost uniform distribution. Each intersection of the grid lines corresponds to a configuration of the urn. 
	The right diagram shows the urn process for the profile given in \Cref{ex:condorcetcycle} using an urn with $N = 5,000$ balls for 500,000 rounds and mutation rate $r = 0.04$, starting from the degenerate distribution in which all balls are labeled with Alternative 2. 
	The green lines depict the actual distribution of balls while the red lines depict the temporal average of urn distributions until the given round.
	}
	\label{fig:ternary}
\end{figure}

We now give two other examples, for which the unique maximal lottery is not degenerate.

\begin{example}[Condorcet cycle]\label{ex:condorcetcycle}
	Consider 900 voters, three alternatives, and the following preference profile $\p$, leading to a so-called Condorcet cycle or Condorcet paradox.
	\[
	\begin{tabular}{ccc}
	$300$ & $300$ & $300$\\
	\midrule
	$1$ & $2$ & $3$\\
	$2$ & $3$ & $1$\\
	$3$ & $1$ & $2$\\
	\end{tabular}
	\]
	Then,
	\[
	M_{\p} =
	\begin{pmatrix}
	0 & \nicefrac23 & \nicefrac13 \\
	\nicefrac13 & 0 & \nicefrac23 \\
	\nicefrac23 & \nicefrac13 & 0\\
	\end{pmatrix}
	\quad\text{and}\quad
	\tilde M_{\p} =
	\begin{pmatrix}
	\phantom{-}0 & \phantom{-}\nicefrac13 & -\nicefrac13\\
	-\nicefrac13 & \phantom{-}0 & \phantom{-}\nicefrac13\\
	\phantom{-}\nicefrac13 & -\nicefrac13 & \phantom{-}0\\
	\end{pmatrix}\text.
	\]
	The set of maximal lotteries $\ml(\p) = \{(\nicefrac13,\nicefrac13,\nicefrac13)\}$ consists of the uniform lottery over the three alternatives.
	A simulation of the urn process for this profile is given in \Cref{fig:ternary} (right). This time, the initial distribution is degenerate with all balls being of type 2. It can be seen how the distribution of balls in the urn closes in on the maximal lottery and remains in its neighborhood for most of the time while the temporal average converges to the maximal lottery.
\end{example}

\begin{example}\label{ex:3cycle+condorcetloser}
Consider the following preference profile $\p$ with 900 voters and four alternatives.
	\[
	\begin{tabular}{ccc}
	$375$ & $300$ & $225$ \\
	\midrule
	$1$ & $3$ & $4$\\
	$2$ & $1$ & $2$\\
	$3$ & $2$ & $3$\\
	$4$ & $4$ & $1$\\
	\end{tabular}
	\]
Then,
	\[
	\tilde M_{\p} =
	\begin{pmatrix}
	\phantom{-}0 & \phantom{-}\nicefrac13 & -\nicefrac19 & \nicefrac13\\
	-\nicefrac13 & \phantom{-}0 & \phantom{-}\nicefrac29 & \nicefrac13\\
	\phantom{-}\nicefrac19 & -\nicefrac29 & \phantom{-}0 & \nicefrac13\\
	-\nicefrac13 & -\nicefrac13 & -\nicefrac13 & \phantom{-}0\\
	\end{pmatrix}\text.
	\]
	The set of maximal lotteries $\ml(\p) = \{(\nicefrac13,\nicefrac16,\nicefrac12,0)\}$ consists of a single lottery, which is supported on the first three alternatives.
	A simulation of an urn process for this profile starting from the uniform distribution is given in \Cref{fig:3cycle+condorcetloser}. The figure shows the distribution in the urn, the temporal average of urn distributions, and the difference of the urn distribution and the maximal lottery in terms of the relative entropy.
\end{example}

\begin{figure}[tb]
	\centering
	\includegraphics{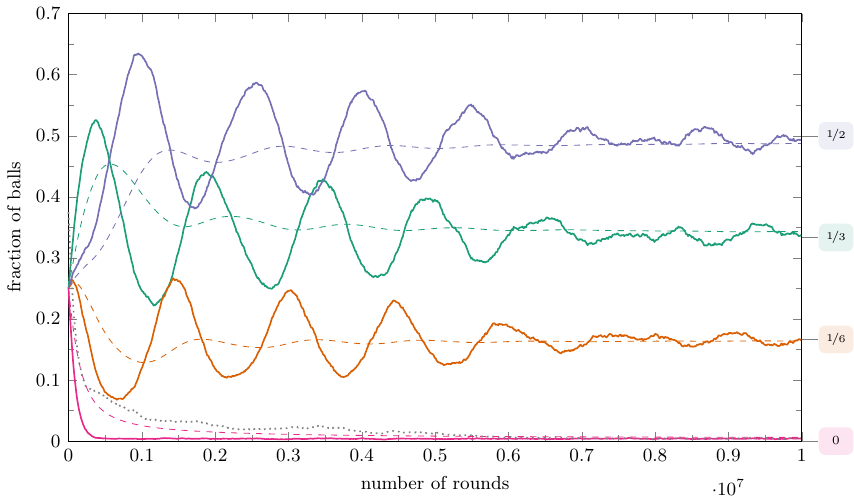}
	\caption{Simulation of the urn process for the profile in \Cref{ex:3cycle+condorcetloser} on an urn with $N = 50,000$ balls for $10^7$ rounds and mutation rate $r = 0.01$. The solid lines show the fraction of balls in the urn. The dashed lines show the temporal average of the fraction of balls in the urn until the given round. The unique maximal lottery is $p=(\nicefrac13,\nicefrac16,\nicefrac12,0)$. The dotted line shows the relative entropy $D(p\mid q) = \sum_{i\in[d]} p_i \log(\frac{p_i}{q_i})$ of $p$ with respect to the distribution in the urn~$q$.}
	\label{fig:3cycle+condorcetloser}
\end{figure}

We use the relative entropy (rather than, for example, the distance $|p-q|$) to measure how much the distribution in the urn diverges from the maximal lottery since the proof of \Cref{thm:main} shows that the entropy of the maximal lottery relative to the continuous approximation of the discrete process converges monotonically to~0.

\section{The Case of a Condorcet Winner}\label{sec:cond}

We give an elementary proof of \Cref{thm:main} for profiles that admit a Condorcet winner.
For those profiles, the unique maximal lottery assigns probability 1 to the Condorcet winner.
To analyze the stationary distribution $\pi\in\Delta(\Delta^{(N)})$ of the Markov chain induced by the urn process, it suffices to examine the fraction of balls labeled with the Condorcet winner.
This allows us to relate the Markov chain to a process that is one-dimensional in the sense that each state can only transition to two different states, and is, thus, easy to analyze.
It also enables us to give a concrete lower bound on the number of balls $N$ required for given $\delta,\tau>0$ for the conclusion of \Cref{thm:main} to hold.

Let $M = M_{\p}$ be the majority matrix of a profile $\p$ with Condorcet winner $i\in[d]$.
Hence, $M_{ij} > \frac12$ for all $j\in [d] \setminus \{i\}$.
Let $\alpha = \min\{M_{ij}\colon j\in [d] \setminus \{i\}\} - \frac 12$.
We slice up $\Delta^{(N)}$ into the level sets of the map $p\mapsto p_i$.
For $k\in\{0,\dots,N\}$, let $S_k = \{p \in\Delta^{(N)}\colon p_i = \frac{k}N\}$ be the states corresponding to distributions with $k$ of the $N$ balls of type $i$.
Then $\sigma_k := \sum_{p\in S_k} \pi(p)$ is the limit probability that the urn is in a state in $S_k$ as the number of rounds goes to infinity. 
We want to show that if $r$ is sufficiently small and $N$ sufficiently large, $\pi$ has most of the probability on states in $S_k$ with $k$ close to $N$.

For 4 alternatives, one can illustrate the ensuing argument as follows.
The set of states $\Delta^{(N)}$ corresponds to rooms in a tetrahedral-shaped pyramid. 
The rooms on the $k$th floor correspond to $S_k$, so that the tip of the pyramid is the state where all balls are of type $i$.
The urn process is a random walk through the pyramid, moving from one room to an adjacent one (which could be on the same floor, the floor below, or the floor above).
With the exception of few floors close to the tip, the probability of going up is always larger than the probability of going down.
It is then intuitively clear that if the pyramid is large enough, one should expect to find the random walk close to the tip of the pyramid most of the time.\footnote{In the analysis of the general case, the number of balls of type $i$ is replaced by the entropy of the urn distribution relative to a maximal lottery. The fact that the number of balls not of type $i$ more likely than not decreases corresponds to the fact that the expected entropy relative to a maximal lottery decreases.}

Recall that $P^{(N,r)}(p,q)$ is the probability of transitioning from state $p$ to state $q$.
Since $\pi$ is a stationary distribution, we have $\pi^\intercal P^{(N,r)} = \pi^\intercal$.
Consider any partition of $\Delta^{(N)}$ into two sets.
For the stationary distribution, the probability of transitioning from the first set to the second is equal to the probability of transitioning from the second set to the first since the probabilities of both sets are conserved.
Applying this to the sets $\bigcup_{l = 0}^{k-1} S_l$ and $\bigcup_{l = k}^N S_l$ for $k\in[N]$ and noticing that the only transitions between the two sets with positive probability are from $S^{k-1}$ to $S^{k}$ and vice versa,
we get 
\begin{align}
	\sum_{p\in S_{k-1}} \pi(p) \sum_{q\in S_k} P^{(N,r)}(p,q) = \sum_{p\in S_{k}} \pi(p) \sum_{q\in S_{k-1}} P^{(N,r)}(p,q)\text.\label{eq:updown}
\end{align}
That is, the probability of being in a state in $S_{k-1}$ and transitioning to a state in $S_k$ equals the probability of being in a state in $S_k$ and transitioning to a state in $S_{k-1}$.

Now observe that for $p \in S_k$, $k \in \{0,\dots,N-1\}$, we have
\begin{align*}
	\sum_{q\in S_{k+1}} P^{(N,r)}(p,q) &\ge 2(1-r)\frac{k(N-k)}{N^2} \left(\frac12 + \alpha\right) + \frac rd \frac{N-k}N =: u_k
\end{align*}
where the left hand side is the probability of replacing a ball of type other than $i$ by one of type $i$ in state $p\in S_k$ (moving up one floor in the pyramid).
Similarly, we find that for $p \in S_k$, $k \in [N]$, we have
\begin{align*}
	\sum_{q\in S_{k-1}} P^{(N,r)}(p,q) &\le 2(1-r)\frac{k(N-k)}{N^2} \left(\frac12 - \alpha\right) + r\frac{d-1}d \frac kN =: d_k
\end{align*}
for the probability of replacing a ball of type $i$ by one of type other than $i$ in state $p\in S_k$ (moving down one floor in the pyramid).
Plugging this into \eqref{eq:updown}, we get
\begin{align}
	\sigma_{k-1} u_{k-1} \le \sigma_k d_k\text.
	\label{eq:increase}
\end{align}
All terms in \eqref{eq:increase} are strictly positive if $r > 0$.

Let $N$ be so that $\frac r{Nd} \ge 2\frac{1-r}{N^2}$ (we choose $r > 0$ later).
Then,
\begin{align*}
	u_k &\ge 2(1-r)\frac{k(N-k)}{N^2}\left(\frac12 + \alpha\right) + 2(1-r)\frac{N-k}{N^2}\\
	&\ge 2(1-r)\frac{(k+1)(N-k-1)}{N^2} \left(\frac12 + \alpha\right)
\end{align*}
where the last inequality uses $1 \ge \frac12 + \alpha$.
Similarly, we find that for $r\le \frac1d$ and $k \le N\left(1 - \frac{r}{\alpha}\right)$,
\begin{align*}
	d_k \le 2(1-r)\frac{k(N-k)}{N^2} \frac{1 - \alpha}{2}\text.
\end{align*}
Hence, with this bound on $k$, we have
\begin{align*}
	\frac{d_{k}}{u_{k-1}} \le \frac{1 - \alpha}{2\left(\frac12 + \alpha\right)} = \frac{1 - \alpha}{1 + 2\alpha} =: \beta\text.
\end{align*}
Thus, by \eqref{eq:increase}, $\frac{\sigma_{k-1}}{\sigma_k} \le \beta < 1$.
We have shown that the cumulative probability $\sigma_k$ of the states $S_k$ decreases at least as fast as the terms of the geometric series with parameter $\beta$ from some $k$ (close to $N$) downwards.

The maximal lottery for $R$ is the degenerate lottery with probability 1 on $i$.
For given $\delta,\tau > 0$, we are aiming for a lower bound on $N$ so that the probability on states with at least a $1-\delta$ fraction of balls of type $i$ in the stationary distribution $\pi$ is at least $1-\tau$.
That is,
\begin{align*}
 	\sum_{k=\lceil N(1-\delta)\rceil}^N \sigma_k \ge 1 - \tau\text.
\end{align*}
First observe that
\begin{align}
	\sum_{k\ge k_0} \beta^k = \beta^{k_0} \frac1{1 - \beta} \le \tau
		\label{eq:geometric}
\end{align}
for $k_0 \ge \frac{\log\left(\tau(1-\beta)\right)}{\log\beta}$.
For our bound, $N$ needs to be large enough so that there are at least $k_0$ integers in the interval $\{\left\lceil\left(1-\delta\right) N\right\rceil,\dots,\left\lfloor\left(1-\frac{r}\alpha\right)N\right\rfloor\}$.
The probability on states in $S_k$ with $k <\left(1-\delta\right)N$ will then be below $\tau$ by \eqref{eq:geometric} and the choice of $k_0$ (since the bound on $d_k$ assumes that $k \le N(1-\frac r\alpha)$).
Choosing $r \le \frac{\alpha\delta}2$ and
\begin{align*}
	N \ge \frac{k_0}{\delta - \frac r\alpha} \ge \frac{1}{\delta} \left\lceil\frac{\log\left(\tau(1-\beta)\right)}{\log\beta}\right\rceil
\end{align*}
achieves this.

In \Cref{ex:condorcetwinner}, there are three alternatives and 900 voters. 
Alternative 1 is a Condorcet winner as it is preferred to every other alternative by 600 of the voters ($\alpha = \frac23 - \frac12 = \frac16$, $\beta = \frac58$).
Suppose we want that at least $90\%$ of the balls in the urn are of type $1$ in at least $90\%$ of rounds ($\delta = 0.2$, $\tau = 0.1$).
Choosing $r = \frac{\alpha\delta}2 = \frac1{60}$, we need $N \ge 70$ balls in the urn. 
	These calculations suggest that, when a Condorcet winner exists, a reasonable choice of the parameters is $N \ge -\frac1\delta\log(\tau)$ and $\frac1N \le r \le \delta$.

\section{Discussion}
\label{sec:relatedwork}

Since the urn process described in this paper only depends on the comparison matrix $M_{\p}$ and the mutation rate $r$, it is connected to various problems unrelated to collective decision-making. In particular, the literature on equilibrium learning and evolutionary game theory has extensively studied dynamics based on payoff matrices and their convergence behavior.

\subsection{Equilibrium learning}
\label{sec:learning}

When interpreting $\tilde{M}_{\p}$ as a symmetric two-player zero-sum game and maximal lotteries as equilibrium strategies, our result can be phrased as a result about a learning procedure for equilibrium play. Such procedures have been extensively studied in game theory and, in particular for zero-sum games, a number of simple and attractive procedures have been proposed. The earliest of these is \emph{fictitious play} \citep{Brow51a,Robi51a} and its variant \emph{stochastic fictitious play} \citep{FuKr93a}.\footnote{\citet{HoSa02a} show that under stochastic fictitious play, players' strategies and beliefs converge to a Nash equilibrium in several classes of games, including two-player zero-sum games. While best-response dynamics are conceptually different from our urn process, their technical approach bears similarities to ours in that they use a deterministic process obtained as a solution to a differential equation to approximate a stochastic process.} 
More recently, the \emph{multiplicative weights update algorithm} \citep[e.g.,][]{FrSc99a,AHK12a} and \emph{regret matching} \citep[][]{HaMa00a,HaMa13a} have been celebrated in game theory, optimization, and machine learning. When translating the multiplicative weights update algorithm to our setting, one obtains a dynamic urn process, in which voters need to compare a drawn ball to all possible alternatives and adjust the distribution in the urn accordingly. It does not suffice to replace a single ball and the total number of balls does not remain constant. Also, the multiplicative weights update algorithm only guarantees convergence of the temporal average. The actual distribution does not converge, even for self-play in symmetric zero-sum games \citep{BaPi18a}.

A notable subarea of machine learning is concerned with \emph{multi-armed bandits}, a simple model of learning optimal sequential decisions when only very limited information is available \citep[see, e.g.,][]{BuCe12a,Sliv19a}. The theory of adversarial bandits is closely connected to learning in repeated multi-player games and it turns out that the prototypical algorithm for adversarial bandits, Exp3 (which stands for ``exponential-weight algorithm for exploration and exploitation''), bears some similarities to the urn process we describe in this paper. Exp3 can be formulated as an algorithm that learns an equilibrium strategy of a symmetric zero-sum game in self-play by iteratively updating a probability distribution merely based on the payoff associated with two actions randomly drawn from the current distribution. 
\citet{ACFS02a} prove strong bounds on the expected average regret and average regret achieved by Exp3 after a finite number of rounds, which imply that the temporal average of the distributions converges to a strategy close to an equilibrium. 
How close it gets to an equilibrium depends on a parameter that is roughly related to our mutation rate. Exp3 updates a probability distribution rather than the contents of a discrete urn and we are not aware of convergence results beyond the temporal average.

The literature on equilibrium learning often focusses on minimizing \emph{regret} rather than relative entropy with respect to an equilibrium distribution \citep[see, e.g.,][]{FoVo99a,ACFS02a}. In our context, the regret of the urn distribution at round $n$ is $\max_{i\in [d]} (\tilde{M}_R X^{(N,r)}(n,p_0))_i$. It follows from \Cref{thm:main} that for sufficiently large $n$, the regret is close to zero with high probability.
Our simulations show that the regret of the urn distribution converges faster than its relative entropy. This is interesting insofar as in order to approximately satisfy the desirable axiomatic properties of maximal lotteries discussed in \Cref{sec:approximateaxiomatics}, low regret is sufficient. It can be shown that a lottery has small regret if and only if it is a maximal lottery of a nearby preference profile. In other words, even if the urn distribution is still far from a maximal lottery, the distribution can perform almost as well as a maximal lottery. We have identified preference profiles where this effect is quite noticeable.

\subsection{Evolutionary Game Theory}

The \emph{replicator equation} in evolutionary game theory \citep[see, e.g.,][]{TaJo78a,ScSi83a,HoSi98a} describes how the distribution of different species changes continuously over time based on the individuals' fitnesses.
In its basic form, it states that the change in the relative frequency of a species equals the relative fitness of the species (that is, its fitness relative to the entire population) minus the change in the size of the entire population.
When the fitness depends linearly on the relative frequencies of the species and the population size is constant, the replicator equation defines the continuous deterministic process $y\colon \mathbb R_{\ge 0} \rightarrow \Delta$ with fitness function $f^{(r)}\colon \Delta\rightarrow\mathbb R^d$ below when setting $r$ to $0$.
When $r>0$, this process corresponds to a continuous and deterministic version of the urn process described in this paper (see \Cref{thm:continuous}).
\begin{equation}
	\begin{aligned}
			\frac d{dt} y(t) &= f^{(r)}(y(t))\text{ \quad and \quad} 
			y(0) = p_0 \text{,\quad where}\\
			f^{(r)}_i(p) &= 2(1 - r)p_i(\tilde Mp)_i + r\left(\frac1d - p_i\right)\text.
	\end{aligned}
	\label{eq:diffeq1}
\end{equation}

Solutions of this equation for $r=0$ are connected to \emph{evolutionary stable} distributions as introduced by \citet{MaPr73a}.
A distribution of species is evolutionary stable if its relative fitness exceeds that of every other distribution in a fixed neighborhood of it. 
Hence, evolutionary stable distributions are attractors of the dynamics defined by Equation~\eqref{eq:diffeq1} (with $r = 0$) in the sense that they are limit points of solutions when the initial distribution $p_0$ is in the respective neighborhood.
Mixed equilibria of zero-sum games such as Rock-Paper-Scissors usually fail to be evolutionary stable. As a consequence, results that prove convergence of dynamics to equilibrium strategies, either modify the underlying process or settle for weaker notions of convergence such as convergence of the temporal average.\footnote{
\citet{FoYo90a} argue that evolutionary stability is not an appropriate solution concept when stochastic events (such as random mutations or chance events in nature) affect the population.
They propose the \emph{stochastically stable set}, which is the smallest set of states such that for every neighborhood of it, with probability 1 the state is in that neighborhood all but a small fraction of the time.
\Cref{thm:main} shows that the stochastically stable set is contained in a small neighborhood of the set of maximal lotteries.
As the mutation rate $r$ goes to $0$, it converges to the set of maximal lotteries.
\citet{FoYo90a} show that the stochastically stable set is always non-empty and consists of those states that minimize a potential function.
}

In the following, we discuss five results that are closest to ours.

\begin{table}[tb]\footnotesize\centering
\makebox[\textwidth][c]{ 
\begin{tabular}{llllll}
\toprule
 & Model & Interaction & Mutations & Pop.~Size & Convergence\\
\midrule
Allesina et al.~(2011) & discrete & pairs, det. & no & fixed & ---$^a$\\ 
Knebel et al.~(2015) & continuous & pairs, det. & no$^b$ & fixed & temporal average\\ 
Laslier et al.~(2017) & discrete & pairs, det. & no & increasing & support of distribution\\ 
Laslier et al.~(2017) & discrete & triples, det. & no & increasing & distribution\\ 
Grilli et al.~(2017) & continuous & triples, stoch. & no & fixed & distribution\\ 
\midrule
\Cref{thm:main} & discrete & pairs, stoch. & yes & fixed & fraction of rounds\\ 
\Cref{cor:main} & discrete & pairs, stoch. & yes & fixed & temporal average\\
\Cref{thm:continuous} & continuous & pairs, det. & yes & fixed & distribution\\
\bottomrule
\end{tabular}
} 
\caption{Comparison of related models and results.\\
$a$: In simulations, \citet{AlLe11a} observe that the temporal average of their process comes close to a maximal lottery after a finite number of rounds. However, when the process is run long enough, the distribution will almost surely degenerate since there are no mutations.\\
$b$: While \citet{KWKF15a} consider a discrete process with mutations, the continuous process they study has no mutations.}% 
\label{tab:comparison}
\end{table}

\textbf{\citet{AlLe11a}} study the competition and coexistence of species in nature via a mathematical framework that is similar to our urn process. There is a fixed finite number of individuals, each of whom is assigned to some species at random. In each round, two randomly selected individuals interact. The superior species will replace the individual of the inferior. Which species is superior to which species is given in the form of a tournament graph, which can be represented by a binary comparison matrix. Interestingly, these tournaments are sampled from distributions that are obtained via multiple rankings of the species called ``limiting factors'', similar to the preferences of voters. Simulations with large populations (e.g., 25,000 individuals) then show that the relative frequencies of the species oscillate around the equilibrium strategy of the skew-comparison matrix. However, this phenomenon is an artifact of the population size and the limited time horizon. In the long run, as mentioned in \Cref{sec:intro}, all species but one will almost surely become extinct.\\
\textbf{\citet{KWKF15a}} study a dynamic process that involves quantum particles and is equivalent to a deterministic version of our urn process. 
Here, balls in the urn model correspond to bosons and alternatives to quantum states. 
The distribution of quantum states determines which states are condensates and are thus observed macroscopically.
Since the number of particles in such systems is typically large, they focus on a deterministic process with a continuum of particles as described in \Cref{sec:results}.
Leveraging a classic result from evolutionary game theory \citep[][Theorem 5.2.3]{HoSi98a}, they show that the \emph{temporal average} of this process converges to an equilibrium strategy (i.e., a maximal lottery) of the zero-sum game induced by the transition probabilities between quantum states. 
All states with probability zero in the equilibrium strategy are depleted; the fractions of the remaining states are bounded away from 0 for all times. 
Even though their model allows for mutations, \citeauthor{KWKF15a} neglect mutations when analyzing the continuous process, which may cause the process to cycle around the equilibrium strategy without converging to it.
Within our proof of \Cref{thm:main} (see \Cref{thm:continuous}), we show that the continuous process with mutations does converge (and not only its temporal average).\footnote{
\citet[][Supplementary Note 1]{KWKF15a} argue that the discrete process with mutations is well-approximated by the continuous process if the number of particles is large and mutations become vanishingly unlikely.
Hence, they conclude that the temporal average of the discrete process converges to an equilibrium strategy, which is in the spirit of \Cref{cor:main}. 
Our understanding is that their arguments are heuristic and not intended to provide a rigorous derivation of this result.
In particular, the arguments do not seem to use that mutations happen with non-zero probability.
Without mutations, however, the discrete process almost surely enters a state with a degenerate distribution.
}
In earlier work, \citet{KKWF13a} have connected the survival and extinction of states to the Pfaffian of the transition matrix. This is reminiscent of a statement by \citet{Kapl95a} about the support of equilibrium strategies in symmetric zero-sum games. 
\citet{RMF06a} study the extinction probabilities for three states with cyclical dominance (``rock-paper-scissors'') for finite populations.\\
\textbf{\citet{LaLa13a}} consider a discrete urn process that is similar to ours, but in which the number of balls in the urn increases over time. 
Two balls are drawn at random and a binary comparison matrix specifies which alternative wins against which alternative (this could be seen as a single voter with possibly intransitive preferences in our model). 
Rather than replacing the losing ball, a new ball of the same type as the winning ball is added to the urn. They show that the distribution in the urn does not converge unless one alternative beats all alternatives (which corresponds to the Condorcet winner case). 
However, the fraction of alternatives not contained in the support of the maximal lottery of the skew-comparison matrix goes to zero. 
Their main results concerns a process in which \emph{three} balls are drawn from the urn. Whenever one of three balls beats both other balls, a new ball of the same type is added to the urn. Otherwise, one of the three types is chosen at random and a ball of that type is added. They prove that the distribution in the urn converges towards the (unique) maximal lottery of the skew-comparison matrix. 
Since the number of balls in the urn increases, convergence is generally very slow.\\
\textbf{\citet{GBMA17a}} consider a dynamic process in population biology to explain the stable coexistence of multiple species. Based on \citeauthor{LaLa13a}'s findings, \citeauthor{GBMA17a} adapt the replicator equation to interactions of triples of individuals. In contrast to \citeauthor{LaLa13a}, they keep the number of individuals constant and do not require the comparison matrix to be binary.
They show that with a continuum of individuals, this process converges to an equilibrium strategy of the skew-comparison matrix.
For a finite number of individuals, permanent coexistence of multiple species is a probability zero event.
However, they argue that interactions of three or more individuals can prolong coexistence compared to pairwise interactions.

\medskip

Without mutations (i.e., $r = 0$), the deterministic process described by the differential equation~\eqref{eq:diffeq1} does not, in general, converge, but only approaches an orbit of constant entropy relative to a zero of the fitness function $f^{(0)}$.
When introducing mutations, the limiting behavior of the process changes qualitatively (see \Cref{fig:3cycle+condorcetloser-cont}). 
As \Cref{thm:continuous} 
shows, it then converges to a zero of the fitness function.
A similar observation has already been made by \citet[][Theorem 2.8]{Hofb11a}.

\begin{figure}[tb]

	\centering
	\includegraphics[width=0.49\textwidth]{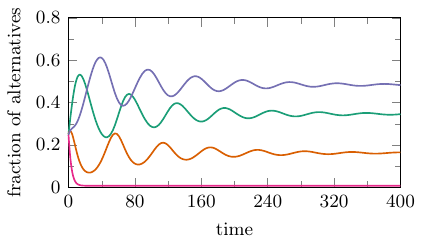}
	\includegraphics[width=0.49\textwidth]{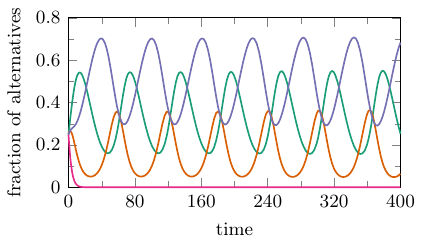}
	\caption{The continuous deterministic process $y(t)$ solving \Cref{eq:diffeq1} for the profile in \Cref{ex:3cycle+condorcetloser}
	with $r = 0.01$ on the left and $r = 0$ on the right. 
	For strictly positive $r$, $y(t)$ converges to a zero of $f^{(r)}$ (see \Cref{thm:continuous}). For $r = 0$, it approaches an orbit of constant entropy relative to a zero of $f^{(r)}$. 
	}
	\label{fig:3cycle+condorcetloser-cont}
\end{figure} 

\begin{restatable}{theorem}{continuous}\label{thm:continuous}
	Let $f^{(r)}$ and $y$ be defined as in \Cref{eq:diffeq1}. 
	If $r > 0$, $f^{(r)}$ has a unique zero $p^{(r)}$ and $y(t)$ converges to $p^{(r)}$ as $t\rightarrow\infty$.
	Moreover, if $r$ goes to 0, then $p^{(r)}$ converges to $\ml(\p)$ in Hausdorff distance.
\end{restatable}

\Cref{tab:comparison} summarizes the key differences between the above mentioned results and ours. In comparison, the main contribution of our work is that we are able to show for a discrete (rather than continuous) process based on stochastic (rather than deterministic) interactions between pairs (rather than triples) that the distribution in the urn is close to a maximal lottery most of the time (rather than convergence of the temporal average).
Methodologically, the approach we take to cope with the discrete process is related to that of \citet{BeWe03a}, who study more general population processes in $n$-player games.\footnote{In \citeauthor{BeWe03a}'s model, there is a population of $N$ individuals for each player, and each individual plays a pure strategy.
In each round, one individual can update their strategy based on the distributions of pure strategies of all other individuals.
An update rule induces a deterministic process described by a differential equation similar to~\eqref{eq:diffeq1} below.
They show that if $N$ is large, the distributions of strategies among the individuals of each role in this stochastic process approximate the deterministic process described by the differential equation.
Our setting corresponds to a symmetric two-player zero-sum game and an update rule based on the comparison matrix $\tilde M$.
The special properties of this instance allow us to make more precise statements about the behavior of the deterministic process, and, thus, of the stochastic process for large $N$. In particular, we show that the deterministic process converges and that its limit approximates a maximal lottery.
}

We believe that \Cref{thm:continuous} as well as \Cref{thm:main} and \Cref{cor:main} are of relevance to the natural sciences. In particular, a discrete model may describe the aforementioned natural phenomena more accurately than continuous ones. 
As \Cref{cor:main} shows, the expectation of the discrete process with a large number of individuals is a good approximation of the continuous process.
Furthermore, the observation that convergence is only guaranteed if mutations occur with small probability and the number of individuals is large enough seems noteworthy.

\section*{Acknowledgments}{\footnotesize% 
	This material is based on work supported by the Deutsche Forschungsgemeinschaft under grants {BR~2312/11-1}, {BR~2312/11-2}, {BR~2312/12-1}, and {BR~5969/1-1}. The authors thank Stefano Allesina, Stergios Athanasoglou, Vincent Conitzer, Javier Esparza, Erwin Frey, Drew Fudenberg, Philipp Geiger, Umberto Grandi, Matthias Greger, Josef Hofbauer, Sean Horan, Johannes Knebel, Jean-François Laslier, Patrick Lederer, Hervé Moulin, Noam Nisan, Robert Schapire, Omer Tamuz, Nicolas Vieille, Jörgen Weibull, Peyton Young, and the participants of the COMSOC video seminar (March 2021), the International Conference on ``New Directions in Social Choice'' (St. Petersburg, July 2021), the Hausdorff Center for Mathematics Symposium (Bonn, August 2021), the seminar of the Center of Economic Research of ETH Z\"urich (Zurich, November 2021), the Nobel Symposium ``One Hundred Years of Game Theory'' (Stockholm, December 2021), the Microeconomics Research Seminar at the University of Hamburg (Hamburg, April 2022), the Economics Seminar at the University of Milano-Bicocca (Milan, May 2022), the Hi!Paris Symposium on Artificial Intelligence and the Social Sciences (Paris, June 2022), the 16th Meeting of the Society of Social Choice and Welfare (Mexico City, June 2022), and the Economics Seminar at Bielefeld University (Bielefeld, June 2022) for stimulating discussions and encouraging feedback.\par}

\newpage
\appendix
\section*{APPENDIX: Proofs}

As guidance for the reader, we outline the main steps in the proof of \Cref{thm:main}.
Fix any $\delta,\tau > 0$.

In \Cref{sec:vectorfield}, we consider, for any $p\in \Delta^{(N)}$, the expected value of $N\left(X^{(N,r)}(k+1,p) - X^{(N,r)}(k,p)\right)$, which, conditional on $X^{(N,r)}(k,p)$, is independent of $k$ since $X^{(N,r)}$ is a time-homogeneous Markov process.
Moreover, it is independent of $N$ since the probability of replacing a ball of type $j$ by one of type $i$ is independent of $N$.
Hence, these expected values induce a continuous function $f^{(r)}\colon\Delta\rightarrow\mathbb R^d$.
We consider $g^{(r)}\colon\Delta\rightarrow\mathbb R^d$ with $g^{(r)}(p) = p + \frac12f^{(r)}(p)$ and show that it maps to $\Delta$.
If $r > 0$, $g^{(r)}$ has a unique fixed-point $p^{(r)}$ (a zero of $f^{(r)}$), which is close to some lottery in $\ml(\p)$ for any small enough $r$.
We choose $r_0$ so that $p^{(r)}$ is no more than $\frac\delta2$ away from $\ml(\p)$ for all $0 < r \le r_0$.
Fixing such an $r$, let $p^*\in\ml(\p)$ be a maximal lottery which is within $\delta$ of $p^{(r)}$.

\Cref{sec:deterministicprocess} studies the following differential equation with $p\in\Delta$, $t\in \mathbb R_{\ge 0}$, and $y(\cdot,p)\colon\mathbb R_{\ge 0}\rightarrow\Delta$.
\begin{align}
	\begin{aligned}
		\frac d{dt}y(t,p) &= f^{(r)}(y(t,p))\\
		y(0,p) &= p
	\end{aligned}
	\label{eq:diffeq0}
\end{align}
A solution to \eqref{eq:diffeq0} is a \emph{deterministic} process that can be interpreted as the \emph{stochastic} process we consider with a continuum of balls.
We show that the unique solution $y^{(r)}(\cdot,p)$ of \eqref{eq:diffeq0} converges to $p^{(r)}$ for any initial state $p\in\Delta$ as $t$ goes to infinity and the convergence is uniform in $p$.
This is done by showing that the entropy of $p^{(r)}$ relative to $y^{(r)}(t,p)$ decreases monotonically at a rate proportional to the square of the distance between $p^{(r)}$ and $y^{(r)}(t,p)$.

\Cref{sec:propertiesofthediscreteprocess} relates the discrete-time stochastic process $X^{(N,r)}$ to the continuous-time deterministic process $y^{(r)}$.
To this end, we extend the former to the real time axis by letting $\bar X^{(N,r)}(t,p) = X^{(N,r)}(k,p)$ for $t\in[\frac{k-1}N,\frac kN)$.
Given any $T > 0$, one can show that with probability close to 1, $\bar X^{(N,r)}$ \emph{approximately} satisfies the integral equation corresponding to \eqref{eq:diffeq0} for $t$ between $0$ and $T$ and uniformly in $p\in\Delta^{(N)}$ if $N$ is large.
Using Gr\"onwall's inequality, we show that with probability close to 1, $\bar X^{(N,r)}(t,p)$ and $y^{(r)}(t,p)$ are close to each other for all $t$ from $0$ to $T$.\footnote{In the language of functional analysis, this step corresponds to an approximation of an operator semi-group. Consider the operators $\Gamma(t)$ on probability measures on $\Delta$ induced by mapping $p\in\Delta$ to $y^{(r)}(t,p)$. Then $\{\Gamma(t)\colon t\ge 0\}$ is an operator semi-group (that is, $\Gamma(s+t) = \Gamma(s)\Gamma(t)$). On $\Delta(\Delta^{(N)})$, we approximate $\Gamma(t)$ by $(P^{(N,r)})^{Nt}$.}
However, for $t$ larger than $T$, they may (and almost surely will) be arbitrarily far apart.

To deal with this, we partition the time axis into consecutive intervals of length $T$ and synchronize the deterministic process with the stochastic process at the beginning of each interval.
More precisely, since $y^{(r)}(t,p)$ converges to $p^{(r)}$ as $t$ goes to infinity uniformly in $p$, we can find $T > 0$ such that $y^{(r)}(t,p)$ is no more than $\frac\delta 4$ away from $p^{(r)}$ for all but possibly a $1-\frac\tau2$ fraction of the interval $[0,T]$ for all $p$.
Moreover, we can choose $N$ large enough so that with probability at least $1-\frac\tau2$, the distance between $\bar X^{(N,r)}$ and $y^{(r)}$ is less than $\frac\delta 4$ for all $t$ in an interval of length $T$ provided both processes start at the same point at the beginning of the interval.
We chop up the time axis into intervals $[0,T]$, $[T,2T]$, $\dots$.
On the interval $[(k-1)T,kT]$, we compare $\bar X^{(N,r)}(t,p)$ to $y^{(r)}(t - (k-1)T, \bar x_{k-1})$, where $\bar x_{k-1} = X^{(N,r)}((k-1)T,p)$.
That is, we reset $y^{(r)}$ to the position of $\bar X^{(N,r)}$ at the beginning of the interval.
In those intervals where the distance between both processes is never more than $\frac\delta 4$, $\bar X^{(N,r)}$ is no more than $\frac\delta 4 + \frac\delta 4 = \frac\delta2$ away from $p^{(r)}$ for all but a $\frac\tau2$ fraction of the interval.
By the choice of $N$, the union of those intervals is almost surely at least a $1-\frac\tau2$ fraction of the time axis.
Summing over all intervals, this is enough to conclude that $\bar X^{(N,r)}$ is no more than $\frac\delta2$ away from $p^{(r)}$ at least a $1-\tau$ fraction of the time.
Since $p^{(r)}$ is no more than $\frac\delta2$ away from $p^*$, we can get the same conclusion with $\delta$ in place of $\frac\delta2$ and $p^*$ in place of $p^{(r)}$.
Translating this statement back to $X^{(N,r)}$ gives the first part of \Cref{thm:main}.

\section{A Continuous Vector Field Induced by the Markov Chain}\label{sec:vectorfield}

In this section, we define a continuous mapping from $\Delta$ to $\Delta$ based on the \emph{expected} urn distribution in the subsequent round for each state of the Markov chain. 
We then show that this mapping admits a unique fixed-point corresponding to an approximate maximal lottery.

Recall that $\{X^{(N,r)}(n,p_0)\colon n\in\mathbb N_0\}$ is a discrete-time, time-homogeneous Markov chain with state space $\Delta^{(N)}$ and transition probability matrix 
\begin{align*}
	P^{(N,r)}(p,p') =
	\begin{cases}
		2(1-r) p_ip_jM(i,j) + \frac rd p_j\qquad&\text{if }i \neq j\\
		2(1-r) \sum_{k = 1}^d p_k^2 + \frac rd &\text{if } i = j\\
	\end{cases}
\end{align*}
for $p\in \Delta^{(N)}$ and $p' = p +\frac{e_i}N-\frac{e_j}N$ for $i,j\in[d] = \{1,\dots,d\}$ with $p'\in \Delta^{(N)}$.
All other transition probabilities are 0.
If $r > 0$, it is irreducible and aperiodic and, thus, admits a unique stationary distribution in $\Delta(\Delta^{(N)})$, a probability distribution over urn distributions. 
We omit writing the initial state $p_0$ whenever it is convenient. 

For $i \in [d]$, we calculate the expected change in the $i$th component of $X^{(N,r)}$ times $N$ given that $X^{(N,r)}$ is in state $p\in \Delta^{(N)}$.
\begin{align*}
	&\phantom{=\text{ }} N\ev{X^{(N,r)}_i(n+1) - X^{(N,r)}_i(n)\mid X^{(N,r)}(n)= p}\\
	&= N\sum_{p'\in \Delta^{(N)}} (p'_i - p_i) P^{(N,r)}(p,p') \\
	&= 2(1-r) \sum_{j \neq i} p_ip_j \left(M(i,j) - M(j,i)\right) + \frac rd \sum_{j\neq i} \left(p_j - p_i\right)\\
	&= 2(1-r) p_i\sum_{j \neq i} \tilde M(i,j)p_j + \frac rd\left(1-p_i - (d-1)p_i\right)\\
	&= 2(1-r) p_i (\tilde Mp)_i + r\left(\frac1d - p_i\right)
\end{align*}
For the last equality, recall that $\tilde M(i,i) = 0$ since $\tilde M$ is skew-symmetric.

Based on this, we define the continuous function $f^{(r)}\colon \Delta \rightarrow \mathbb R^d$ with
\begin{align}
	f^{(r)}_i(p) = 2(1-r) p_i (\tilde Mp)_i + r\left(\frac1d - p_i\right)\text.
	\label{eq:f}
\end{align}

Let $g^{(r)}\colon\Delta\rightarrow\Delta$ with $g^{(r)}(p) = p + \frac12f(p)$ for $p\in\Delta$.
We show that $g^{(r)}$ is well-defined (that is, indeed maps to $\Delta$) and has a fixed-point.
If $r > 0$, this fixed-point is unique and we denote it by $p^{(r)}$.
As $r$ goes to 0, $p^{(r)}$ converges to the set of maximal lotteries for the profile $\p$ that induces $\tilde M$.
We note that if $r = 0$, $g^{(r)}$ has a unique fixed-point if and only if there is a unique maximal lottery.

\begin{lemma}\label{lem:stationary}
	For $r > 0$, $g^{(r)}$ has a unique fixed-point $p^{(r)}$.
	Moreover, for every $\delta > 0$, there is $r_0$ so that $p^{(r)}\in B_\delta(\ml(\p))$ for all $r \le r_0$.
\end{lemma}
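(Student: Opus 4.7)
The plan is to decompose the claim into three parts: existence of a fixed-point of $g^{(r)}$ via Brouwer, uniqueness via an algebraic identity that activates the skew-symmetry of $\tilde M$, and convergence $p^{(r)}\to \ml(\p)$ as $r\to 0$ via a compactness argument. The principal insight I anticipate is the uniqueness identity below, where pairing the subtracted fixed-point equations with $p-q$ kills the left-hand side.

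For existence, I first check that $g^{(r)}$ sends $\Delta$ to $\Delta$. Summing coordinates of $g^{(r)}(p)$ gives $1$ because $\sum_i f^{(r)}_i(p) = 2(1-r)\,p^\intercal\tilde M p = 0$ by skew-symmetry of $\tilde M$. For non-negativity, the bound $(\tilde M p)_i \ge -1$ gives after simplification $g^{(r)}(p)_i \ge \tfrac{r}{2}p_i + \tfrac{r}{2d} \ge 0$. Continuity of $g^{(r)}$ together with compactness and convexity of $\Delta$ then yield a fixed-point by Brouwer's theorem.

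For uniqueness, I first observe that any zero $p$ of $f^{(r)}$ has strictly positive coordinates: if $p_i=0$, the equation $f^{(r)}_i(p)=0$ reduces to $r/d=0$, which is impossible. Dividing $f^{(r)}_i(p)=0$ by $p_i$ yields $2(1-r)(\tilde M p)_i = r(1 - \tfrac{1}{dp_i})$. Given two zeros $p,q$, subtracting these identities coordinate-wise and taking the inner product with $p-q$ produces
\[
2(1-r)\,(p-q)^\intercal\tilde M (p-q) = \tfrac{r}{d}\sum_i \frac{(p_i-q_i)^2}{p_iq_i}.
\]
The left-hand side vanishes by skew-symmetry while the right-hand side is a sum of non-negative terms, so $p=q$.

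For the convergence claim, I argue by contradiction. If it failed, there would exist $r_n\to 0$ with $p^{(r_n)}\notin B_\delta(\ml(\p))$, and by compactness a subsequence of $(p^{(r_n)})$ would converge to some $p^*\in\Delta\setminus B_\delta(\ml(\p))$; I would then derive a contradiction by showing $p^*\in\ml(\p)$. For coordinates $i$ with $p^*_i>0$, dividing $f^{(r_n)}_i(p^{(r_n)})=0$ by $p^{(r_n)}_i$ and letting $n\to\infty$ gives $(\tilde M p^*)_i = 0$. For coordinates with $p^*_i=0$, rewriting the fixed-point equation as $p^{(r_n)}_i\bigl[r_n - 2(1-r_n)(\tilde M p^{(r_n)})_i\bigr] = r_n/d > 0$ shows that the bracketed factor is strictly positive; passing to the limit forces $-2(\tilde M p^*)_i \ge 0$, i.e., $(\tilde M p^*)_i\le 0$. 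Combining both cases gives $\tilde M p^* \le 0$, i.e., $p^*\in\ml(\p)$. The main subtlety lies precisely in this vanishing-coordinate analysis, which is handled cleanly by rewriting the fixed-point equation in product form to avoid division by zero.
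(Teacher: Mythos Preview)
Your proof is correct and follows essentially the same three-part structure as the paper: Brouwer for existence, an algebraic identity exploiting the skew-symmetry of $\tilde M$ for uniqueness, and a compactness/subsequence argument for convergence to $\ml(\p)$.

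Two minor differences are worth noting. For uniqueness, you subtract the two rewritten equations and pair with $p-q$, using $(p-q)^\intercal\tilde M(p-q)=0$; the paper instead computes $p^\intercal\tilde M q + q^\intercal\tilde M p = 0$ and substitutes the rewritten equations directly. Both computations unwind to the same sum $\sum_i (p_i-q_i)^2/(p_iq_i)$, and yours is arguably the cleaner presentation. For the convergence step, you split into cases according to whether $p^*_i>0$ or $p^*_i=0$; the paper avoids this by observing directly from the rewritten equation that $(\tilde M p^{(r)})_i = \tfrac{r}{2(1-r)}\bigl(1-\tfrac{1}{p^{(r)}_i d}\bigr) \le \tfrac{r}{2(1-r)}$ for all $i$, which immediately gives $(\tilde M p^*)_i\le 0$ in the limit without any case distinction. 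Your case analysis is correct, but the paper's uniform bound is a bit more economical.
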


\begin{proof}
	We verify that $g^{(r)}$ maps to $\Delta$.
	For all $p\in\Delta$, $\sum_{i \in[d]} f^{(r)}_i(p) = 2(1-r) p^\intercal\tilde M p + r\left(1 - \sum_{i\in[d]} p_i\right) = 0$ since $\tilde M$ is skew-symmetric and $p\in\Delta$.
	Moreover,
	\begin{align*}
		f^{(r)}_i(p) = 2(1-r)p_i \underbrace{(\tilde Mp)_i}_{\ge -1} + r\left(\frac 1d - p_i\right) \ge -2p_i\text.
	\end{align*}
	Thus,
	\begin{align*}
		g^{(r)}_i(p) \ge p_i + \frac12 (-2p_i) \ge 0\text.
	\end{align*}
	It follows that $g^{(r)}$ maps to $\Delta$.
	Moreover, $g^{(r)}$ is continuous since $f^{(r)}$ is continuous.
	Hence, by Brouwer's Theorem, $g^{(r)}$ has a fixed point $p^{(r)}$.

	Now let $r > 0$.
	Then, for all $p\in\Delta$ with $f^{(r)}(p) = 0$, we have for all $i\in[d]$, $p_i > 0$ since $p_i = 0$ implies $f^{(r)}_i(p) = r\frac1d > 0$.
	Hence, we can rewrite $f^{(r)}(p) = 0$ as follows: for all $i\in[d]$,
	\begin{align}
		2(1-r)(\tilde M p)_i = r\left(1 - \frac1{p_i d}\right)
		\label{eq:stationary2}
	\end{align}
	
	To show that $f^{(r)}$ has a unique zero, assume that $f^{(r)}(p) = f^{(r)}(q) = 0$ for $p,q \in\Delta$.
	We have 
	\begin{align*}
		0 &= 2(1-r)\left(p^\intercal\tilde Mq + q^\intercal\tilde M p\right)\\
		&= 2(1-r)\sum_{i\in[d]} p_i \left(\tilde Mq\right)_i + q_i\left(\tilde Mp\right)_i\\
		&\overset{\eqref{eq:stationary2}}{=} r\sum_{i \in[d]} p_i\left(1-\frac1{q_id}\right) + q_i\left(1 - \frac1{p_id}\right)\\
		&= \frac rd \sum_{i\in[d]} \frac{p_iq_i - p_i}{q_i} + \frac{p_iq_i - q_i}{p_i}\\
		&= -\frac rd \sum_{i\in[d]} \frac{\left(p_i - q_i\right)^2}{p_i q_i} \le - \frac rd\lvert p - q\rvert_2^2
	\end{align*}
	where the first equality uses the skew-symmetry of $\tilde M$ (hence, $p^\intercal\tilde M q = - q^\intercal\tilde Mp$), the third equality follows from \eqref{eq:stationary2} and the fact that $p$ and $q$ are zeros of $f^{(r)}$, and the last two are algebra.
	($\lvert\cdot\rvert_2$ denotes the $L^2$-norm.)
	This sequence of equalities implies that $p = q$.
	Hence, $p^{(r)}$ is the unique zero of $f^{(r)}$ for $r > 0$.
	Since every fixed-point of $g^{(r)}$ is a zero of $f^{(r)}$, $g^{(r)}$ has a unique fixed-point.
	
	For the last statement, let $\delta > 0$.
	By \eqref{eq:stationary2}, for all $r > 0$ and $i\in[d]$,
	\begin{align}
		\left(\tilde M p^{(r)}\right)_i = \frac r{2(1-r)}\left(1 - \frac1{p_i^{(r)}d}\right) \le \frac r{2(1-r)}\text.
		\label{eq:mllimit}
	\end{align}
	Suppose for every $r_0 > 0$, there is $r < r_0$ so that $p^{(r)} \not\in B_\delta(\ml(\p))$.
	Then we can find a sequence $(r_n)$ going to 0 so that $p^{(r_n)}\not\in B_\delta(\ml(\p))$ for all $n$.
	By passing to a subsequence, we may assume that $p^{(r_n)} \rightarrow p\not\in B_\delta(\ml(\p))$.
	But from \eqref{eq:mllimit} it follows that $\tilde M p \le 0$ so that $p \in \ml(\p)$, which is a contradiction.
\end{proof}

\section{Properties of the Deterministic Process}\label{sec:deterministicprocess}

In this section, we study a deterministic version of the stochastic process described by the Markov chain. We thus have a continuum of balls and continuous time, and show that this process converges to the unique fixed-point identified in the previous section.

Function $f^{(r)}$ defined in \Cref{eq:f} gives rise to a (first-order ordinary) differential equation for continuously differentiable functions from $[0,\infty)$ to $\Delta$, that is, functions in $\mathcal C^1([0,\infty),\Delta)$.
For $y\in \mathcal C^1([0,\infty),\Delta)$ and $p_0\in\Delta$, consider
\begin{align}
	\begin{aligned}
		\frac d{dt} y(t) &= f^{(r)}(y(t))\\
		y(0) &= p_0
	\end{aligned}
	\label{eq:diffeq}
\end{align}
We show that \eqref{eq:diffeq} has a unique global solution $y^{(r)}$ for all $r > 0$ and $p_0\in\Delta$.
Moreover, this solution converges to the zero $p^{(r)}$ of $f^{(r)}$ as $t$ goes to infinity.
Since $r$ remains fixed throughout this section, we frequently omit the superscript $(r)$.

The proof that \eqref{eq:diffeq} has a unique local solution with values in $\mathbb R^d$ is standard.
Only the fact that the solution does not leave the domain $\Delta$ of $f$ and can, thus, be extended to a global solution requires attention.

\begin{lemma}\label{lem:diffeqexistence}
	For every $p_0 \in\Delta$, \eqref{eq:diffeq} has a unique solution $y\in \mathcal C^1([0,\infty),\Delta)$ with $y(0) = p_0$.
\end{lemma}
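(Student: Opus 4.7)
The strategy is the standard one for an autonomous polynomial ODE on a compact invariant set: apply Picard--Lindelöf to obtain a unique local solution, prove forward invariance of $\Delta$ so that the solution cannot leave the domain, and then extend to all of $[0,\infty)$ by continuation.

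First, I would observe that $f^{(r)}$ as defined in \eqref{eq:f} is a polynomial in $p$ and hence smooth on $\mathbb R^d$; in particular it is locally Lipschitz. By the Picard--Lindelöf theorem, for every $p_0 \in \Delta$ there is $T > 0$ and a unique $y \in \mathcal C^1([0,T),\mathbb R^d)$ with $y(0) = p_0$ and $\dot y = f^{(r)}(y)$. The task is to upgrade this local $\mathbb R^d$-valued solution to a global $\Delta$-valued one.

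The key step is forward invariance of $\Delta$, which I would establish by verifying the two defining constraints separately. For the affine constraint $\sum_i y_i = 1$, differentiating and using skew-symmetry of $\tilde M$ gives
\[
\frac{d}{dt}\sum_{i=1}^d y_i(t) = \sum_{i=1}^d f^{(r)}_i(y(t)) = 2(1-r)\, y(t)^\intercal \tilde M y(t) + r\Bigl(1 - \sum_i y_i(t)\Bigr),
\]
which vanishes whenever $\sum_i y_i(t) = 1$, so the hyperplane is invariant and $\sum_i y_i(t) = 1$ holds on the whole interval of existence. For nonnegativity, I would use a tangency argument: on the face $\{y_i = 0\}$ the first term of $f^{(r)}_i$ vanishes and we are left with $f^{(r)}_i(y) = r/d > 0$, so the vector field points strictly inward along that face. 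Concretely, suppose $t^\ast > 0$ is the first time with $y_i(t^\ast) = 0$ (with $y_i \ge 0$ on $[0,t^\ast]$); then $\dot y_i(t^\ast) \le 0$, contradicting $\dot y_i(t^\ast) = f^{(r)}_i(y(t^\ast)) = r/d > 0$. The case $(p_0)_i = 0$ is handled by the same inward-pointing computation at $t = 0$. Hence $y(t) \in \Delta$ throughout the interval of existence.

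Once invariance is shown, global existence is immediate: since $\Delta$ is compact and $f^{(r)}$ is bounded on $\Delta$, the solution cannot blow up, and the standard extension theorem for ODEs lets us extend $y$ to all of $[0,\infty)$, with uniqueness inherited from Picard--Lindelöf. The only genuinely nontrivial step is the invariance of the boundary faces $\{y_i = 0\}$; the positivity of the mutation term $r/d$ is precisely what makes this transparent, so this argument is cleanest for $r > 0$, which is the regime of interest.
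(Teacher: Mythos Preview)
Your proposal is correct and follows essentially the same approach as the paper: Picard--Lindel\"of for local existence, invariance of the affine hyperplane via $\sum_i f^{(r)}_i(y)=2(1-r)y^\intercal\tilde M y + r(1-\sum_i y_i)=0$ on $\Delta$, inward-pointing at the faces since $f^{(r)}_i(y)=r/d>0$ when $y_i=0$, and extension by compactness. If anything, your first-exit-time contradiction and the separate treatment of $(p_0)_i=0$ make the invariance step more explicit than the paper's version.
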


\begin{proof}
	Note that $f$ is Lipschitz-continuous in a neighborhood of $\Delta$. 
	It follows from the Picard-Lindel\"of Theorem that for any $t_0\in[0,\infty)$ and $p\in\Delta$, the system
	\begin{align}
		\begin{aligned}
			\frac d{dt} y(t) &= f(y(t))\\
			y(t_0) &= p
		\end{aligned}
	\end{align}
 	has a unique local solution, that is, a solution $y\in\mathcal C^1((t_0-\epsilon,t_0+\epsilon),\mathbb R^d)$.
	
	We observe that $y$ maps to $\Delta$.
	First, by the same arguments as in the proof of \Cref{lem:stationary}, we have 
	\begin{align*}
		\frac d{dt}\sum_{i\in[d]} y_i(t) = \sum_{i\in[d]} f_i(y(t)) = 0
	\end{align*}
	whenever $y(t)\in\Delta$.
	Second, if $y_i(t) = 0$, then $\frac d{dt} y_i(t) = f_i(y(t)) > 0$.
	Hence, $y(t) \in\Delta$ for all $t\in(t_0 - \epsilon,t_0+\epsilon)$.
	Since $t_0\in[0,\infty)$ was arbitrary, it follows that $y$ can be uniquely extended to a global solution in $\mathcal C^1([0,\infty),\Delta)$.
\end{proof}

Denote by $y^{(r)}(t,p_0)\in\mathcal C^1([0,\infty),\Delta)$ the unique solution to \eqref{eq:diffeq} with $y^{(r)}(0,p_0) = p_0$.
We will sometimes suppress the argument $p_0$ when it is clear from the context.

We want to show that if $r > 0$, $y^{(r)}(t,p_0)$ converges to the zero $p^{(r)}$ of $f^{(r)}$ as $t$ goes to infinity.
Moreover, the convergence is uniform in $p_0$.
The proof of this fact in \Cref{lem:continuoustimeconvergence} uses the relative entropy (aka the Kullback–Leibler Divergence) of $p,q\in\Delta$, which is defined as
\begin{align*}
	D(p\mid q) = \sum_{i\in[d]} p_i \log\left(\frac{p_i}{q_i}\right)\text.
\end{align*}
Moreover, the following lower bound on the relative entropy will be helpful \citep[see, e.g.,][Lemma 11.6.1]{CoTh06a}.
\begin{lemma}\label{lem:relativeentropybound}
	For all $p,q\in\Delta$,
	\begin{align*}
		D(p\mid q) \ge \frac1{2\log 2} \left\lvert p - q\right\rvert^2\text.
	\end{align*}
\end{lemma}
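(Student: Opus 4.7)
The statement is Pinsker's inequality (in $L^1$ form), and I would follow the classical two-step reduction to the binary case.

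\emph{Step 1 (Binary case).} First I would establish the inequality for distributions on two atoms: for all $P,Q\in[0,1]$,
\[
P\log\tfrac{P}{Q} + (1-P)\log\tfrac{1-P}{1-Q} \;\ge\; \tfrac{2}{\log 2}\,(P-Q)^2.
\]
Fix $P$ and let $h(Q)$ denote the left side minus the right side. Then $h(P)=0$, and a direct computation shows that $h'(Q)$ factors as $(Q-P)$ times a nonnegative expression; the only nontrivial algebraic input is the elementary bound $Q(1-Q)\le 1/4$, which forces that second factor to stay nonnegative throughout $(0,1)$. Consequently $h$ is nonincreasing on $[0,P]$ and nondecreasing on $[P,1]$, so it attains its minimum at $Q=P$, giving $h\ge 0$ on all of $[0,1]$. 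The boundary values $Q\in\{0,1\}$ are handled by continuity with the standard conventions.

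\emph{Step 2 (Reduction to the binary case).} For arbitrary $p,q\in\Delta$, I would group the coordinates according to $A=\{i\in[d]\colon p_i\ge q_i\}$ and set $P=\sum_{i\in A}p_i$, $Q=\sum_{i\in A}q_i$. Since $\sum_i(p_i-q_i)=0$, the positive and negative parts of $p-q$ have equal $L^1$ mass, so $|p-q|=2(P-Q)$. Applying the log-sum inequality separately to the indices in $A$ and to those in $[d]\setminus A$ yields
\[
D(p\mid q) \;=\; \sum_{i\in A} p_i\log\tfrac{p_i}{q_i} + \sum_{i\notin A} p_i\log\tfrac{p_i}{q_i} \;\ge\; P\log\tfrac{P}{Q} + (1-P)\log\tfrac{1-P}{1-Q}.
\]
Combining with Step~1 gives $D(p\mid q) \ge \tfrac{2}{\log 2}(P-Q)^2 = \tfrac{1}{2\log 2}|p-q|^2$, as required.

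\emph{Main obstacle.} Essentially all of the content sits in the sign analysis for $h'$ in Step~1, and even that collapses to the bound $Q(1-Q)\le 1/4$. Once Step~1 is settled, Step~2 is just two applications of log-sum together with the identity $|p-q|=2(P-Q)$. Since this is the textbook statement of Pinsker's inequality and \citet[Lemma~11.6.1]{CoTh06a} is already cited in the statement, the proof can be kept very brief or simply imported from there.
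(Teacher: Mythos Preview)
Your proposal is correct. The paper does not supply its own proof of this lemma; it merely cites \citet[Lemma~11.6.1]{CoTh06a}, and your two-step argument (binary case via the derivative/sign analysis, then reduction via the log-sum inequality and the identity $|p-q|=2(P-Q)$) is exactly the standard proof given there, so there is nothing to compare.
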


To ease notation, we write $\chi_S$ for the indicator function of a set $S\subset\mathbb R^d$ and $\bar\chi_S = 1-\chi_S$ for the indicator function of the complement of $S$.

\begin{lemma}\label{lem:continuoustimeconvergence}
	Let $r > 0$.
	Then, 
	\begin{align*}
		\lim_{t\rightarrow\infty}\sup\left\{\left\lvert y^{(r)}(t,p_0) - p^{(r)}\right\rvert\colon p_0\in\Delta\right\} = 0\text.
	\end{align*}
\end{lemma}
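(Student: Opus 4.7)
My plan is to use the Kullback--Leibler divergence $V(t) := D\bigl(p^{(r)} \,\big\|\, y^{(r)}(t,p_0)\bigr)$ as a Lyapunov function. First I will differentiate $V$ along a trajectory. Using $\dot y_i = f^{(r)}_i(y)$, the skew-symmetry of $\tilde M$ (which gives $p^{(r)\intercal}\tilde M y = -y^\intercal \tilde M p^{(r)}$), and the characterization \eqref{eq:stationary2} of $p^{(r)}$ as the zero of $f^{(r)}$, namely $2(1-r)(\tilde M p^{(r)})_i = r(1 - \tfrac{1}{d\, p^{(r)}_i})$, a short algebraic calculation collapses the sum
\[
\dot V(t) = -\sum_{i\in[d]} \frac{p^{(r)}_i}{y_i(t)}\, f^{(r)}_i(y(t))
\]
via the identity $\tfrac{a}{b}+\tfrac{b}{a} - 2 = \tfrac{(a-b)^2}{ab}$ to
\[
\dot V(t) = -\frac{r}{d} \sum_{i\in[d]} \frac{\bigl(y_i(t) - p^{(r)}_i\bigr)^2}{y_i(t)\, p^{(r)}_i}.
\]
Every summand is non-negative, so $V$ is non-increasing, and because $y_i\, p^{(r)}_i \le 1$ this yields the cleaner estimate
\[
\dot V(t) \le -\frac{r}{d}\,\bigl|y(t) - p^{(r)}\bigr|_2^{\,2} \le -\frac{r}{d^2}\,\bigl|y(t) - p^{(r)}\bigr|^2.
\]

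Next I would establish a uniform lower bound on the components of $y(t,p_0)$ away from $t=0$. From $|(\tilde M y)_i|\le 1$ and the ODE, $\dot y_i \ge \frac{r}{d} - (2-r)y_i$, which via Gr\"onwall's inequality gives $y_i(t) \ge \frac{r/d}{2-r}\bigl(1 - e^{-(2-r)t}\bigr)$ for all $t\ge 0$ and all initial data $p_0\in\Delta$. Hence for any fixed $t_0>0$ there is $c_0 = c_0(t_0,r,d) > 0$ such that $y_i(t,p_0) \ge c_0$ for every $t\ge t_0$, $i\in[d]$, and $p_0\in\Delta$. In particular, $V(t_0,p_0) \le \log d + \log(1/c_0)$ is uniformly bounded, and the trajectory lies in the compact set $K := \{q\in\Delta : q_i \ge c_0 \text{ for all } i\}$ for $t\ge t_0$.

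To deduce pointwise convergence $y(t,p_0)\to p^{(r)}$, I argue that $V(t)\to 0$. Since $V$ is non-increasing and non-negative, it converges to some $V_\infty\ge 0$. If $V_\infty > 0$, then by continuity of $D(p^{(r)}\|\cdot)$ on $K$ and the fact that $p^{(r)}$ is its unique zero, there is $\epsilon > 0$ with $|q - p^{(r)}| \ge \epsilon$ whenever $q\in K$ and $D(p^{(r)}\|q)\ge V_\infty$. Then $\dot V(t) \le -r\epsilon^2/d^2$ for all $t\ge t_0$, contradicting $V \ge 0$. Thus $V(t)\to 0$, and \Cref{lem:relativeentropybound} gives $|y(t,p_0) - p^{(r)}|^2 \le 2\log 2\cdot V(t) \to 0$.

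For the required uniformity over $p_0\in\Delta$, I will combine continuous dependence on initial data with compactness. By Picard--Lindel\"of, $p_0 \mapsto y(t,p_0)$ is continuous on $\Delta$ for each fixed $t>0$, and hence so is $p_0 \mapsto V(t,p_0)$ (since $y(t,p_0)$ has full support for $t>0$). Given $\epsilon>0$, pick for each $p_0$ a time $T(p_0)\ge t_0$ with $V(T(p_0),p_0) < \epsilon/2$; by continuity this persists on an open neighborhood $U(p_0)$ on which $V(T(p_0),\cdot) < \epsilon$. Extracting a finite subcover $U(p_0^1),\dots,U(p_0^k)$ of the compact simplex $\Delta$ and setting $T := \max_j T(p_0^j)$, the monotonicity of $V$ yields $V(T,q_0) < \epsilon$ for all $q_0 \in \Delta$. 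One more application of \Cref{lem:relativeentropybound} converts this into the uniform convergence $\sup_{p_0}|y^{(r)}(t,p_0) - p^{(r)}| \to 0$. The hard part is the algebraic simplification of $\dot V$ in the first step; the remainder is a standard LaSalle-type argument together with a compactness/monotonicity passage from pointwise to uniform convergence.
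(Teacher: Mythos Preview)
Your proof is correct and shares its core with the paper's: both use $D(p^{(r)}\mid y(t))$ as a Lyapunov function, carry out the same derivative computation via skew-symmetry of $\tilde M$ and the stationary equation \eqref{eq:stationary2} to reach $\dot V = -\tfrac{r}{d}\sum_i (y_i-p^{(r)}_i)^2/(y_ip^{(r)}_i)$, and establish a uniform positive lower bound on the coordinates $y_i(t)$ for $t$ bounded away from zero.

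The only substantive difference is in how you upgrade pointwise to uniform convergence. The paper argues quantitatively: from the bound $\int_1^\infty |y(s)-p^{(r)}|^2\,ds \le C$ (with $C$ independent of $p_0$ once $t\ge 1$) it extracts, for each $\delta>0$, a time $t_0(p_0,\delta)$ lying in the \emph{fixed} interval $[1,\,1+Cd\sqrt d/(r\delta^2)]$ at which $|y-p^{(r)}|<\delta$, then bounds $V$ at that time by $\delta C'$ and invokes monotonicity. You instead use continuous dependence on initial data together with a finite-subcover argument on the compact simplex. Both routes are sound; the paper's gives explicit bounds on the time $T$ in terms of $r,d,\epsilon$, while yours is cleaner and avoids the somewhat fiddly estimate $D(p^{(r)}\mid y(t_0))\le \delta C'$. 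Two minor terminological points: the lower bound on $y_i$ follows from a scalar comparison principle rather than Gr\"onwall proper, and continuous dependence on initial data is a standard companion result to Picard--Lindel\"of rather than Picard--Lindel\"of itself.
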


\begin{proof}
	Fix $p_0$ in the interior of $\Delta$ and write $y = y(\cdot, p_0)$.
	We show that the entropy of $p^{(r)}$ relative to $y(t)$ decreases at a rate of at least $\frac r{d\sqrt{d}}\left|p^{(r)} - y(t)\right|_2^2$.
	\begin{align*}
		\frac d{dt} D(p^{(r)} \mid y(t)) &= \frac d{dt} \sum_{i\in [d]} p^{(r)}_i \log\left(\frac{p^{(r)}_i}{y_i(t)}\right) = -\sum_{i\in [d]} p^{(r)}_i \frac{\frac d{dt}y_i(t)}{y_i(t)}\\
		&\overset{\rom{1}}{=} -\sum_{i\in [d]} p^{(r)}_i \frac{f_i(y(t))}{y_i(t)} \\
		&= -\sum_{i \in[d]} p^{(r)}_i \frac{2(1-r)y_i(t) (\tilde M y(t))_i + r \left(\frac1d - y_i(t)\right)}{y_i(t)}\\
		&= -2(1-r)\sum_{i\in [d]} p^{(r)}_i (\tilde M y(t))_i - r\sum_{i\in [d]} p^{(r)}_i \left(\frac{1}{y_i(t) d} - 1\right)\\
		&\overset{\rom{2}}{=} 2(1-r)\sum_{i\in [d]} y_i(t) (\tilde M p^{(r)})_i - r\left(\sum_{i\in [d]} \frac{p^{(r)}_i}{y_i(t)d} - 1\right)\\
		&\overset{\rom{3}}{=} \sum_{i\in [d]} y_i(t) r\left( 1 -\frac1{p^{(r)}_i d} \right) - r\left(\sum_{i\in [d]} \frac{p^{(r)}_i}{y_i(t)d} - 1\right)\\
		&= r\left(2 - \frac1d\sum_{i\in [d]} \frac{y_i(t)}{p^{(r)}_i} + \frac{p^{(r)}_i}{y_i(t)}\right)\\
		&\overset{\rom{4}}{=} -\frac rd \sum_{i\in [d]} \frac{(p^{(r)}_i - y_i(t))^2}{p^{(r)}_iy_i(t)} \le -\frac r{d\sqrt{d}} \left\lvert p^{(r)} - y(t)\right\rvert^2 
	\end{align*}
	Here, $\rom{1}$ follows from the fact that $y$ satisfies \eqref{eq:diffeq}, $\rom{2}$ uses the skew-symmetry of $\tilde M$ and $\sum_{i\in[d]} p^{(r)}_i = 1$, $\rom{3}$ uses \eqref{eq:stationary2}, and $\rom{4}$ uses $a^2 + b^2 = (a + b)^2 -2ab$ for any $a,b\in\mathbb R$.

	For $t \ge t_0 \ge 0$, we have
	\begin{align*}
		0 \le D(p^{(r)} \mid y(t)) = D(p^{(r)}\mid y(t_0)) + \int_{t_0}^t \frac d{ds} D(p^{(r)} \mid y(s))ds \le D(p^{(r)} \mid y(t_0))\text.
	\end{align*}
	Combining this with the sequence of equalities above, we see that
	 
	\begin{align}
		0 \le \frac rd\sum_{i = 1}^d \int_{t_0}^t \frac{(p^{(r)}_i - y_i(s))^2}{p^{(r)}_iy_i(s)} ds = -\int_{t_0}^t \frac d{ds} D(p^{(r)}\mid y(s)) ds \le D(p^{(r)}\mid y(t_0))\text.
		\label{eq:relativeentropybound}
	\end{align}

	We want to prove that $y(t,p_0)$ converges to $p^{(r)}$ uniformly in $p_0$ as $t$ goes to $\infty$.
	That is, for all $\epsilon > 0$, there exists $T > 0$ such that for all $t \ge T$ and all $p_0 \in \Delta$, $\lvert y(t,p_0) - p^{(r)}\rvert < \epsilon$.
	To this end, first note that if $y_i(t,p_0) < \frac r{4d}$, then
	\begin{align*}
		\frac d{dt} y_i(t,p_0) = 2(1-r)y_i(t,p_0) \underbrace{\left(\tilde My(t,p_0)\right)_i}_{\ge -1} + r\left(\frac 1d - y_i(t,p_0)\right) \ge - \frac r{2d} + \frac rd  \ge \frac{r}{2d}\text.
	\end{align*}
	Hence, for all $p_0 \in\Delta$, $i \in[d]$, and $t \ge 1$, $y_i(t,p_0) \ge \frac r{4d}$.
	We can, thus, upper bound $D(p^{(r)} \mid y(t,p_0))$ for all $p_0 \in\Delta$ and $t \ge 1$ by $C = \max_{p \in \Delta^r} D(p^{(r)}\mid p) < \infty$, where $\Delta^r = \{p \in\Delta\colon p_i \ge \frac r{4d} \text{ for all } i\in[d]\}$.

	Now we prove uniform convergence in $p_0$. 
	Let $\epsilon > 0$.
	It follows from \eqref{eq:relativeentropybound} with $t_0 = 1$ that given $\delta > 0$, for all $p_0 \in\Delta^r$,
	\begin{align*}
		\int_{t \ge 1} \bar\chi_{B_\delta(p^{(r)})}(y(t,p_0))dt \le \frac {Cd\sqrt{d}}{r\delta^2}\text.
	\end{align*}
	 
	Hence, for every $p_0 \in\Delta^r$, we can find $t_0(p_0,\delta) \in [1,1 + \frac{Cd\sqrt{d}}{r\delta^2}]$ such that 
	\begin{align}
		\left\lvert y(t_0(p_0,\delta),p_0) - p^{(r)}\right\rvert < \delta\text.
		\label{eq:deltabound}
	\end{align}
	Using the estimate $\log(x - \delta) \ge \log(x) - \frac\delta{x-\delta}$ for the last inequality, we find that
	\begin{align*}
		D(p^{(r)}\mid y(t_0(p_0,\delta))) &\le \sum_{i\in[d]} \log\left(\frac{p^{(r)}_i}{p^{(r)}_i-\delta}\right) = \sum_{i\in[d]}\log\left(p_i^{(r)}\right) - \log\left(p_i^{(r)} - \delta\right)\\
		&\le \sum_{i\in[d]} \frac{\delta}{p_i^{(r)}-\delta} \le \delta C'
	\end{align*}
	where $C' = 2d\max\{\frac1{p_i^{(r)}}\colon i\in[d]\}$ if $\delta\in(0,\frac12\min\{p_i^{(r)}\colon i\in[d]\})$.
	
	We use this bound and the fact that the relative entropy is non-increasing in $t$ to show that $\lvert y(t,p_0) - p^{(r)}\rvert < \epsilon$ for $t \ge t_0(p_0,\delta)$ for sufficiently small $\delta$.
	By \Cref{lem:relativeentropybound}, we have for all $p \in\Delta$, $D(p^{(r)}\mid p) \ge \frac1{2\log 2} \lvert p^{(r)} - p\rvert^2$.
	Hence, $\lvert p^{(r)} - y(t,p_0)\rvert \le \sqrt{2\log(2) \delta C'}$ for $t \ge t_0(p_0,\delta)$.
	Recalling that $t_0(p_0,\delta) \le 1 + \frac{Cd\sqrt{d}}{r\delta^2} =: T$, we have for $\delta \in(0,\frac{\epsilon^2}{2\log(2)C'})$ that $\lvert p^{(r)} - y(t,p_0)\rvert < \epsilon$ for all $t \ge T$ and $p_0 \in \Delta$.
	Since $\epsilon$ was arbitrary, this proves uniform convergence.
\end{proof}

The next lemma states that for any $\delta > 0$, if the process $y^{(r)}$ starts sufficiently close to $p^{(r)}$, it will never get further than $\delta$ away from $p^{(r)}$.

\begin{lemma}\label{lem:continuoustimestaysclose}
	Let $r > 0$ and $\delta > 0$.
	Then, there is $\eta > 0$ such that
	\begin{align*}
		\sup\left\{\left\lvert y^{(r)}(t,p) - p^{(r)}\right\rvert\colon t\ge 0,\, p\in B_\eta(p^{(r)})\right\} < \delta
	\end{align*}
\end{lemma}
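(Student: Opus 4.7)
My plan is to reuse the Lyapunov property of the relative entropy $D(p^{(r)} \mid \cdot)$ along the flow, which was the heart of the proof of Lemma~\ref{lem:continuoustimeconvergence}. Recall that $p^{(r)}$ lies in the interior of $\Delta$, since $p^{(r)}_i = 0$ would force $f^{(r)}_i(p^{(r)}) = r/d > 0$, contradicting $f^{(r)}(p^{(r)}) = 0$. Thus for $\eta$ smaller than $\min_i p^{(r)}_i$, every $p \in B_\eta(p^{(r)})$ has strictly positive coordinates, so $D(p^{(r)} \mid p)$ is finite.

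First I would verify that, starting from such a $p$, the trajectory $y^{(r)}(t,p)$ remains in the interior of $\Delta$ for every $t \ge 0$. The estimate from the proof of Lemma~\ref{lem:continuoustimeconvergence} shows that whenever $y_i^{(r)}(t,p) < r/(4d)$ the drift $f^{(r)}_i(y^{(r)}(t,p))$ is at least $r/(2d) > 0$, so each component satisfies $y_i^{(r)}(t,p) \ge \min\{p_i, r/(4d)\} > 0$ for all $t$. In particular $y^{(r)}(t,p)$ stays in the interior, and the computation in Lemma~\ref{lem:continuoustimeconvergence} giving $\frac{d}{dt} D(p^{(r)} \mid y^{(r)}(t,p)) \le 0$ is valid for every $t \ge 0$. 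Integrating yields
\[
D\bigl(p^{(r)} \,\big\vert\, y^{(r)}(t,p)\bigr) \le D(p^{(r)} \mid p), \qquad t \ge 0.
\]

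Combining this with Lemma~\ref{lem:relativeentropybound}, I obtain
\[
\bigl|y^{(r)}(t,p) - p^{(r)}\bigr|^2 \le 2\log(2) \, D\bigl(p^{(r)} \,\big\vert\, y^{(r)}(t,p)\bigr) \le 2\log(2) \, D(p^{(r)} \mid p)
\]
for all $t \ge 0$ and every $p$ in a neighborhood of $p^{(r)}$ inside $\Delta^{\circ}$. Since $p \mapsto D(p^{(r)} \mid p)$ is continuous on $\Delta^{\circ}$ and vanishes at $p = p^{(r)}$, I can shrink $\eta$ further so that $B_\eta(p^{(r)}) \subset \Delta^{\circ}$ and $D(p^{(r)} \mid p) < \delta^2/(2\log 2)$ for every $p \in B_\eta(p^{(r)})$. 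For this $\eta$, $|y^{(r)}(t,p) - p^{(r)}| < \delta$ uniformly in $t \ge 0$ and $p \in B_\eta(p^{(r)})$.

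There is no real obstacle: the statement is essentially a direct corollary of the fact that $D(p^{(r)} \mid \cdot)$ is a strict Lyapunov function for the flow. The only subtle point is that the entropy computation must be valid starting from $t = 0$, not merely $t \ge 1$; this is why I insist that $\eta < \min_i p^{(r)}_i$ so that the initial state $p$ is already in the interior and the trajectory never reaches the boundary.
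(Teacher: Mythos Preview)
Your proposal is correct and follows essentially the same approach as the paper: both use that $D(p^{(r)}\mid y^{(r)}(t,p))$ is non-increasing along the flow (established in Lemma~\ref{lem:continuoustimeconvergence}), together with the Pinsker-type bound of Lemma~\ref{lem:relativeentropybound}, and then choose $\eta$ by continuity of $D(p^{(r)}\mid\cdot)$ at $p^{(r)}$. Your version is slightly more careful in verifying that the trajectory stays in the interior from $t=0$, which the paper's proof takes for granted.
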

\begin{proof}
	Recall that $p^{(r)}_i > 0$ for all $i\in[d]$.
	By \Cref{lem:relativeentropybound}, if $p\not\in B_\delta(p^{(r)})$, then $D(p^{(r)}\mid p) \ge \frac1{2\sqrt{2}}\delta^2 =: C$.
	Since $D(p^{(r)}\mid\cdot)$ is continuous on the interior of $\Delta$ and $D(p^{(r)}\mid p^{(r)}) = 0$, there is $\eta > 0$ such that $D(p^{(r)}\mid p) < C$ for all $p\in B_\eta(p^{(r)})$.
	In the proof of \Cref{lem:continuoustimeconvergence}, we have seen that $D(p^{(r)}\mid y^{(r)}(t,p))$ is non-increasing in $t$.
	Hence, for $p\in B_\eta(p^{(r)})$, it follows that $\lvert y^{(r)}(t,p) - p^{(r)}\rvert < \delta$ for all $t \ge 0$.
\end{proof}

Summarizing \Cref{lem:stationary}, \Cref{lem:diffeqexistence}, and \Cref{lem:continuoustimeconvergence}, we get the following theorem.

\continuous*

\begin{remark}[Drawing without replacement]\label{rem:proofadaption1}
	For the urn process with drawing without replacement, $f^{(r)}$ as derived in \Cref{sec:vectorfield} would depend on $N$.
	The solution $y^{(r)}$ of the differential equation \eqref{eq:diffeq} and the unique zero $p^{(r)}$ of $f^{(r)}$ would, thus, also depend on $N$.
	The previous lemmas carry over to this case with the straightforward adaptations.
\end{remark}

\section{Properties of the Stochastic Process}\label{sec:propertiesofthediscreteprocess}

In this section, we study the behavior of the Markov chain $X^{(N,r)}$ by exploring its connections to the deterministic process $y^{(r)}$.

We estimate the distance between $X^{(N,r)}$ and the set of maximal lotteries in several steps.
First, we choose $T_0$ large enough so that $y^{(r)}(\cdot,p_0)$ is close to $p^{(r)}$ for all but a small fraction of the time interval $[0,T_0]$ for all initial states $p_0$.
In \Cref{lem:uniformbounded}, we show that if $N$ is large enough, $X^{(N,r)}$ approximately solves (the integral equation equivalent to) the differential equation~\eqref{eq:diffeq} with high probability on the interval $[0,T_0]$ for any initial state.
From this we conclude in \Cref{lem:uniformboundedcontinuous} that for large enough $N$, $X^{(N,r)}$ is close to $y^{(r)}$ with high probability on any interval of length $T_0$, provided both processes start with the same state at the beginning of that interval.
Thus, $X^{(N,r)}$ is with high probability approximately equal to $p^{(r)}$ for all but a small fraction of rounds in any interval of length $T_0$.
Now we chop up the time line into successive intervals of length $T_0$.
In expectation, $X^{(N,r)}$ stays close to $y^{(r)}$ in a large fraction of these intervals. 
Using an adaption of the strong law of large numbers, we show in \Cref{lem:averagecongervence} that $X^{(N,r)}$ is \emph{almost surely} close to $p^{(r)}$ for all but a small fractions of rounds. 
Lastly, since by \Cref{lem:stationary}, $p^{(r)}$ is close to a maximal lottery if $r$ is small enough, \Cref{thm:main} follows.

The integral equation equivalent to \eqref{eq:diffeq} is
\begin{align}
	\begin{aligned}
		y(t) - y(0) &= \int_{0}^t f^{(r)}(y(s))ds\\
		y(0) &= p_0
	\end{aligned}
	\label{eq:inteq}
\end{align}
We show that $X^{(N,r)}$ approximately satisfies \eqref{eq:inteq} (with the integral replaced by a sum) for large $N$ on bounded time intervals.
\Cref{lem:uniformbounded} below states that for any time $T$ and any $\delta>0$, we can choose $N$ large enough so that with high probability, $X^{(N,r)}(n,p_0)$ does not violate \eqref{eq:inteq} by more than $\delta$ within the first $NT$ rounds independently of the initial state $p_0\in\Delta^{(N)}$. 
For the proof, we use the following proposition due to \citet[][Proposition 4.1]{Kurt70a}.
(The statement is adapted to our setting.)

\begin{proposition}[\citealp{Kurt70a}]\label{prop:kurtz}
	Let $(z^{(N)})_{N\in\mathbb N}$ be a sequence of discrete-time Markov chains with states spaces $A^{(N)}$ and probability transition matrices $Q^{(N)}$.
	Suppose there exist sequences of positive number $(\alpha_N)$ and $(\epsilon_N)$,
	$$
	\lim_{N\rightarrow\infty} \alpha_N = \infty \qquad\text{and}\qquad \lim_{N\rightarrow\infty}\epsilon_N = 0
	$$
	such that
	\begin{align}
		\sup_{N\in\mathbb N}\sup_{p\in A^{(N)}} \alpha_N\sum_{q\in A^{(N)}} \lvert p - q\rvert Q^{(N)}(p,q) <\infty
		\label{eq:kurtz1}
	\end{align}
	and
	\begin{align}
		\lim_{N\rightarrow\infty}\sup_{p\in A^{(N)}} \alpha_N\sum_{q\in A^{(N)}, \lvert p - q\rvert > \epsilon_N} \lvert p - q\rvert Q^{(N)}(p,q) = 0\text.
		\label{eq:kurtz2}
	\end{align}
	Let 
	\begin{align*}
		G^{(N)}(p) = \alpha_N\sum_{q\in A^{(N)}} (q - p)Q^{(N)}(p,q)\text.
	\end{align*}
	Then, for every $\delta > 0$ and $T > 0$,
	\begin{align*}
		\lim_{N\rightarrow\infty}\sup_{p\in A^{(N)}} \prob{\sup_{n\le \alpha_n T} \left\lvert z^{(N)}(n) - z^{(N)}(0) - \sum_{k=0}^{n-1} \frac1{\alpha_N} G^{(N)}(z^{(N)}(k))\right\rvert > \delta\mid z^{(N)}(0) = p} = 0\text.
	\end{align*}
\end{proposition}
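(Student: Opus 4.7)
The proof approach I have in mind is the standard semi-martingale decomposition plus Doob's inequality: the process $z^{(N)}(n)$ minus its cumulative drift is a martingale, and one controls the supremum of this martingale uniformly via a second-moment bound coming from the two hypotheses (\ref{eq:kurtz1})--(\ref{eq:kurtz2}).

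Concretely, define
\begin{equation*}
M^{(N)}_n \;=\; z^{(N)}(n) - z^{(N)}(0) - \sum_{k=0}^{n-1} \tfrac{1}{\alpha_N} G^{(N)}\bigl(z^{(N)}(k)\bigr).
\end{equation*}
Since $\mathbb{E}[z^{(N)}(k+1) - z^{(N)}(k) \mid \mathcal{F}_k] = \sum_{q'} (q' - z^{(N)}(k)) Q^{(N)}(z^{(N)}(k), q') = \tfrac{1}{\alpha_N} G^{(N)}(z^{(N)}(k))$, the sequence $\{M^{(N)}_n\}$ is an $\mathbb{R}^d$-valued martingale in the natural filtration, and the quantity inside the supremum in the proposition is exactly $|M^{(N)}_n|$. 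Because $|M^{(N)}_n|^2$ is then a non-negative submartingale, Doob's maximal inequality yields
\begin{equation*}
\mathbb{P}\Bigl( \sup_{n \le \alpha_N T} |M^{(N)}_n| > \delta \,\Big|\, z^{(N)}(0) = p\Bigr) \;\le\; \frac{1}{\delta^2}\, \mathbb{E}\bigl[ |M^{(N)}_{\lfloor \alpha_N T \rfloor}|^2 \bigm| z^{(N)}(0) = p\bigr].
\end{equation*}
Orthogonality of martingale increments together with the estimate $\mathrm{Var}(z^{(N)}(k+1) \mid \mathcal{F}_k) \le \sum_{q'} |q' - z^{(N)}(k)|^2 Q^{(N)}(z^{(N)}(k), q') =: \sigma^2_N(z^{(N)}(k))$ (using $\mathbb{E}|X - \mathbb{E}X|^2 \le \mathbb{E}|X - c|^2$ at $c = z^{(N)}(k)$) then give
\begin{equation*}
\mathbb{E}|M^{(N)}_{\lfloor \alpha_N T\rfloor}|^2 \;\le\; \lfloor \alpha_N T\rfloor \cdot \sup_{q \in A^{(N)}} \sigma^2_N(q) \;\le\; T \cdot \Bigl( \alpha_N \sup_{q \in A^{(N)}} \sigma^2_N(q) \Bigr).
\end{equation*}
Thus it suffices to show that $\alpha_N \sup_q \sigma^2_N(q) \to 0$ as $N\to\infty$, after which uniformity in the starting state $p$ is automatic.

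To turn the first-moment hypotheses into this second-moment bound I would split $\sigma^2_N(q)$ at the truncation level $\epsilon_N$. For small jumps, the bound $|q'-q|^2 \le \epsilon_N |q'-q|$ together with hypothesis (\ref{eq:kurtz1}) gives a contribution to $\alpha_N \sigma^2_N(q)$ of at most $\epsilon_N \cdot C \to 0$. For large jumps, one wants to write $|q'-q|^2 \le D \cdot |q'-q|$ for some uniform $D$, after which (\ref{eq:kurtz2}) forces the remaining contribution to zero uniformly in $q$.

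The main obstacle is exactly this last step: hypothesis (\ref{eq:kurtz2}) only controls the first moment of large jumps, so without additional integrability one cannot in general control their second moment. In the paper's concrete setting this difficulty disappears because every state space $\Delta^{(N)}$ sits inside the compact simplex $\Delta$, whose $L^1$-diameter is bounded by $2$; this uniform bound legitimizes the crude estimate $|q'-q|^2 \le 2|q'-q|$ and keeps all the estimates above uniform in the initial state $p$, completing the argument.
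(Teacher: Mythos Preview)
The paper does not prove this proposition; it is quoted from Kurtz (1970, Proposition~4.1) and used as a black box to obtain \Cref{lem:uniformbounded}. So there is no proof in the paper to compare against.

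Your approach is the standard one and is essentially correct. The martingale decomposition, Doob's maximal inequality, and the orthogonality-of-increments bound are all fine, and you correctly isolate the one delicate point: bounding the second moment of large jumps from hypothesis~\eqref{eq:kurtz2}, which only controls their first moment.

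Your resolution via the compactness of $\Delta$ suffices for the paper's application, but the proposition is in fact true as stated, without any diameter bound. The fix is to handle the large-jump contribution by a first-moment (total variation) argument rather than a second-moment one. Write $J_k = z^{(N)}(k+1)-z^{(N)}(k)$ and split $J_k = J_k^{\mathrm{s}} + J_k^{\ell}$ according to whether $|J_k|\le\epsilon_N$. The large-jump part of the martingale obeys
\[
\sup_{n\le \alpha_N T}\left|\sum_{k<n}\left(J_k^{\ell} - \mathbb E[J_k^{\ell}\mid\mathcal F_k]\right)\right|
\;\le\; \sum_{k<\alpha_N T}\left(|J_k^{\ell}| + \mathbb E[\,|J_k^{\ell}|\mid\mathcal F_k]\right),
\]
and both sums on the right have expectation at most $T\cdot\alpha_N\sup_q\sum_{|q'-q|>\epsilon_N}|q'-q|\,Q^{(N)}(q,q')$, which tends to $0$ by~\eqref{eq:kurtz2}; Markov's inequality then finishes this piece. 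The small-jump martingale is handled exactly as you do, via Doob and $|q'-q|^2\le\epsilon_N|q'-q|$ together with~\eqref{eq:kurtz1}. This closes the gap you identified and yields the result in the stated generality, uniformly in the initial state.
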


The following lemma applies this result to $(X^{(N,r)})_{N\in\mathbb N}$ for a fixed $r$.

\begin{lemma}\label{lem:uniformbounded}
	For every $T > 0$ and $\delta > 0$,
	\begin{align}
		\lim_{N\rightarrow\infty} \sup_{p \in \Delta^{(N)}} \prob{\sup_{n \le NT} \left\lvert X^{(N,r)}(n,p) - X^{(N,r)}(0,p) - \sum_{k = 0}^{n-1} f^{(r)}\left(X^{(N,r)}(k,p)\right)\right\rvert \ge \delta} = 0
		\label{eq:uniformboundeddiscretetime}
	\end{align}
\end{lemma}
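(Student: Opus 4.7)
The plan is to invoke \Cref{prop:kurtz} directly: set $z^{(N)} = X^{(N,r)}$, $A^{(N)} = \Delta^{(N)}$, $Q^{(N)} = P^{(N,r)}$, and $\alpha_N = N$. With these choices, the drift function appearing in Kurtz's statement is
\[
G^{(N)}(p) \;=\; N \sum_{q \in \Delta^{(N)}} (q - p)\, P^{(N,r)}(p, q) \;=\; N\,\ev{X^{(N,r)}(1) - X^{(N,r)}(0) \mid X^{(N,r)}(0) = p},
\]
which, by the explicit single-step computation already carried out at the start of \Cref{sec:vectorfield}, equals $f^{(r)}(p)$. In particular, $G^{(N)}$ does \emph{not} depend on $N$ and coincides with the vector field governing the ODE \eqref{eq:diffeq}, which is precisely what is needed for Kurtz's conclusion to collapse to the form stated in the lemma.

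Next, I would verify the two technical hypotheses of \Cref{prop:kurtz}. Every positive-probability transition of $X^{(N,r)}$ either leaves the state fixed or replaces a ball of type $j$ by one of type $i$, so every $q$ with $P^{(N,r)}(p,q) > 0$ satisfies $|p - q| \le 2/N$. Consequently
\[
\alpha_N \sum_{q} |p - q|\, Q^{(N)}(p, q) \;\le\; N \cdot \tfrac{2}{N} \;=\; 2
\]
uniformly in $p$ and $N$, establishing \eqref{eq:kurtz1}. Choosing $\epsilon_N = 3/N$, the set $\{q : |p - q| > \epsilon_N\}$ is unreachable from $p$ in a single step, so the supremum in \eqref{eq:kurtz2} is identically zero. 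Applying \Cref{prop:kurtz} then yields the claimed uniform convergence in \eqref{eq:uniformboundeddiscretetime}.

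The main (and essentially only) obstacle is the bookkeeping outlined above: the heavy lifting is done by Kurtz's theorem, and the special structure of the urn chain (nearest-neighbor transitions in $L^1$, drift exactly $f^{(r)}$) makes the hypotheses trivial to check. For the without-replacement variant flagged in \Cref{rem:proofadaption1}, the expected single-step drift acquires an $O(1/N)$ correction, uniform in $p\in\Delta^{(N)}$; this keeps the size bound \eqref{eq:kurtz1} intact and yields a sequence of drifts $G^{(N)}$ converging uniformly to $f^{(r)}$, for which Kurtz's proposition still delivers the same limiting conclusion.
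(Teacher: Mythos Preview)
Your proposal is correct and essentially identical to the paper's proof: both apply \Cref{prop:kurtz} with $z^{(N)}=X^{(N,r)}$, $A^{(N)}=\Delta^{(N)}$, $Q^{(N)}=P^{(N,r)}$, $\alpha_N=N$, verify \eqref{eq:kurtz1} and \eqref{eq:kurtz2} via the bound $|p-q|\le 2/N$ on one-step transitions, and identify $G^{(N)}=f^{(r)}$ from the drift computation in \Cref{sec:vectorfield}. The only cosmetic difference is your choice $\epsilon_N=3/N$ versus the paper's $\epsilon_N=2/N$, and your explicit paragraph on the without-replacement variant mirrors the paper's \Cref{rem:proofadaption2}.
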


\begin{proof}
	Recall that $P^{(N,r)}$ is the transition probability matrix of $X^{(N,r)}$.

	We apply \Cref{prop:kurtz} with $z^{(N)} = X^{(N,r)}$, $A^{(N)} = \Delta^{(N)}$, $Q^{(N)} = P^{(N,r)}$, $\alpha_N = N$, and $\epsilon_N = \frac2N$ and
	check \eqref{eq:kurtz1} and \eqref{eq:kurtz2}:
	\begin{align*}
		&\sup_{N \in\mathbb N} \sup_{p \in\Delta^{(N)}} N \sum_{q \in\Delta^{(N)}} \lvert p - q\rvert P^{(N,r)}(Np,Nq)\\
		= &\sup_{N \in\mathbb N} \sup_{p \in\Delta^{(N)}} N \sum_{i,j = 1}^d \frac1N \lvert e_i - e_j\rvert P^{(N,r)}(Np,Np - e_i + e_j) \le 2
	\end{align*}
	and 
	\begin{align*}
		\lim_{N\rightarrow\infty} \sup_{p \in \Delta^{(N)}} N \sum_{q\in\Delta^{(N)}\colon \lvert p - q\rvert > \frac2N} \lvert p - q\rvert P^{(N,r)}(Np,Nq) = 0
	\end{align*}
	Recalling the definition of $f^{(r)}$ shows that $G^{(N)} = f^{(r)}$ for all $N$.
	Hence, \eqref{eq:uniformboundeddiscretetime} follows.
\end{proof}

	Note that \Cref{lem:uniformbounded} does not use the full strength of \Cref{prop:kurtz} since $G^{(N)} = f^{(r)}$ is independent of $N$. 
	Recall from \Cref{rem:proofadaption1} that for the urn process without replacement, $f^{(r)}$ does depend on $N$.
	Hence, the additional flexibility of \Cref{prop:kurtz} is needed in that case.

Since we want to compare the discrete-time process $X^{(N,r)}$ to the continuous-time process $y^{(r)}$ solving \eqref{eq:diffeq}, it is convenient to turn $X^{(N,r)}$ into a continuous-time process.
To this end, let $\bar X^{(N,r)}(t,p) = X^{(N,r)}(\lfloor Nt\rfloor,p)$ for all $t \ge 0$ and $p\in\Delta$.
(That is, time is scaled by $\frac1N$.)
$\bar X^{(N,r)}$ is a right-continuous step function, which takes steps of length $\frac1N \lvert e_i - e_j\rvert = \frac2N$ and is constant on time intervals $[\frac kN,\frac{k+1}N)$.
Thus, as $N$ grows, the steps become smaller and appear in shorter intervals.
\Cref{lem:uniformbounded} shows that on any bounded time interval, $\bar X^{(N,r)}$ satisfies~\eqref{eq:inteq} up to some arbitrary error with high probability when $N$ is large enough.
That is, for every $T > 0$ and $\delta > 0$,
\begin{align}
	\lim_{N\rightarrow\infty} \sup_{p\in \Delta^{(N)}} \prob{\sup_{t \le T} \left\lvert\bar X^{(N,r)}(t,p) - \bar X^{(N,r)}(0,p) - \int_0^t f^{(r)}\left(\bar X^{(N,r)}(s,p)ds\right)\right\rvert \ge \delta} = 0
	\label{eq:uniformbounded}
\end{align}

In \Cref{lem:uniformboundedcontinuous}, we show that this implies that the trajectories of $y^{(r)}(\cdot,p)$ and $\bar X^{(N,r)}(\cdot,p)$ stay close to each other with high probability on a given bounded time interval for any initial state $p$ for large $N$.
Importantly for later use, the bound on the probability is uniform in $p$.

\begin{lemma}\label{lem:uniformboundedcontinuous}
	For every $T > 0$ and $\delta > 0$,
	\begin{align}
		\lim_{N\rightarrow\infty} \sup_{p \in\Delta^{(N)}} \prob{\sup_{t \le T} \left\lvert y^{(r)}(t,p) - \bar X^{(N,r)}(t,p)\right\rvert \ge \delta} = 0\text.
		\label{eq:uniformboundedcontinuous}
	\end{align}
\end{lemma}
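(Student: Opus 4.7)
The plan is to prove the statement by combining Grönwall's inequality with the approximate integral equation \eqref{eq:uniformbounded} for $\bar X^{(N,r)}$. The starting point is that $y^{(r)}(\cdot,p)$ exactly satisfies
\[
y^{(r)}(t,p) = p + \int_0^t f^{(r)}(y^{(r)}(s,p))\,ds,
\]
while Lemma~\ref{lem:uniformbounded}, rephrased in continuous time as in \eqref{eq:uniformbounded}, says that $\bar X^{(N,r)}(\cdot,p)$ satisfies the same integral equation on $[0,T]$ up to an error $\eta$, with probability tending to $1$ uniformly in $p\in\Delta^{(N)}$ as $N\to\infty$.

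The key analytic input is that $f^{(r)}$ is Lipschitz continuous on $\Delta$, say with constant $L$ (this is immediate from \eqref{eq:f}, since $\tilde M$ is bounded and the coordinates $p_i$ lie in $[0,1]$). On the event
\[
A_{N,\eta} = \left\{ \sup_{t\le T}\left|\bar X^{(N,r)}(t,p) - p - \int_0^t f^{(r)}(\bar X^{(N,r)}(s,p))\,ds\right| < \eta \right\},
\]
I would subtract the two integral equations to get, for every $t\le T$,
\[
\left|\bar X^{(N,r)}(t,p) - y^{(r)}(t,p)\right| \le \eta + \int_0^t \left|f^{(r)}(\bar X^{(N,r)}(s,p)) - f^{(r)}(y^{(r)}(s,p))\right|\,ds \le \eta + L\int_0^t \left|\bar X^{(N,r)}(s,p) - y^{(r)}(s,p)\right|\,ds.
\]
Grönwall's inequality (applied to the right-continuous function $t\mapsto |\bar X^{(N,r)}(t,p) - y^{(r)}(t,p)|$) then yields
\[
\sup_{t\le T}\left|\bar X^{(N,r)}(t,p) - y^{(r)}(t,p)\right| \le \eta e^{LT}.
\]

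Given $\delta>0$, I would thus choose $\eta = \delta e^{-LT}$, so that the event $A_{N,\eta}$ is contained in $\{\sup_{t\le T}|\bar X^{(N,r)}(t,p) - y^{(r)}(t,p)| \le \delta\}$. Taking complements and applying \eqref{eq:uniformbounded} with this choice of $\eta$ gives
\[
\sup_{p\in\Delta^{(N)}} \prob{\sup_{t\le T}\left|y^{(r)}(t,p) - \bar X^{(N,r)}(t,p)\right| \ge \delta} \le \sup_{p\in\Delta^{(N)}} \prob{A_{N,\eta}^{c}} \xrightarrow[N\to\infty]{} 0,
\]
which is the desired conclusion.

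The main technical point to check carefully is the legitimacy of the Grönwall step for the step function $\bar X^{(N,r)}$: this works because both sides of the integral inequality are bounded measurable functions of $t$ and the integral inequality holds pointwise on $[0,T]$. Beyond that, the only subtlety is the uniformity in the initial state $p\in\Delta^{(N)}$, but this is already built into Lemma~\ref{lem:uniformbounded}, so no additional work is required here. The rest is a bookkeeping exercise of choosing $\eta$ small enough in terms of $\delta$, $T$, and $L$.
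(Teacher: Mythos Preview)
Your proposal is correct and follows essentially the same argument as the paper: both use Lipschitz continuity of $f^{(r)}$ on $\Delta$, subtract the exact integral equation for $y^{(r)}$ from the approximate one for $\bar X^{(N,r)}$ on the high-probability event from \eqref{eq:uniformbounded}, apply Gr\"onwall's inequality to obtain $\sup_{t\le T}|y^{(r)}-\bar X^{(N,r)}|\le \eta e^{LT}$, and then choose $\eta$ small enough relative to $\delta$. Your remark about the legitimacy of Gr\"onwall for the right-continuous step function is a point the paper glosses over, but it is indeed unproblematic.
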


\begin{proof}
	First observe that since $f^{(r)}$ is continuously differentiable on the compact space $\Delta$, there is $C \in\mathbb R_{\ge 0}$ such that $f^{(r)}$ is Lipschitz-continuous with constant $C$.
	Let $T > 0$, $\delta > 0$, and $p\in\Delta$.
	If $\sup_{t \le T} \lvert\bar X^{(N,r)}(t,p) - \bar X^{(N,r)}(0,p) - \int_0^t f^{(r)}(\bar X^{(N,r)}(s,p))ds\rvert < \epsilon$,
	 
	then for all $t\in[0,T]$,
	\begin{align*}
		\left\lvert y^{(r)}(t,p) - \bar X^{(N,r)}(t,p)\right\rvert &= \left\lvert y^{(r)}(t,p) - y^{(r)}(0,p) - \bar X^{(N,r)}(t,p) + \bar X^{(N,r)}(0,p)\right\rvert\\
		&< \epsilon + \int_0^t \left\lvert f^{(r)}\left(y^{(r)}(s,p)\right) - f^{(r)}\left(\bar X^{(N,r)}(s,p)\right)\right\rvert ds\\ 
		&\le \epsilon + C \int_0^t \left\lvert y^{(r)}(s,p) - \bar X^{(N,r)}(s,p)\right\rvert ds
	\end{align*}
	The first inequality follows from the assumption about $\bar X^{(N,r)}$ and the fact that $y^{(r)}$ satisfies \eqref{eq:inteq}.
	The second inequality uses the Lipschitz-continuity of $f^{(r)}$.
	We apply Gr\"onwall's inequality to conclude that
	\begin{align*}
		\sup_{t\le T} \left\lvert y^{(r)}(t,p) - \bar X^{(N,r)}(t,p)\right\rvert < \epsilon e^{CT} < \delta
	\end{align*}	
	for $\epsilon > 0$ small enough.
	Note that the choice of $\epsilon$ does not depend on $p = \bar X^{(N,r)}(0,p)$.
	
	By \eqref{eq:uniformbounded}, for every $\rho > 0$, we can find $N_0\in\mathbb N$ such that for every $N \ge N_0$,
	\begin{align*}
		\sup_{p\in\Delta^{(N)}} \prob{ \sup_{t \le T} \left\lvert\bar X^{(N,r)}(t,p) - \bar X^{(N,r)}(0,p) - \int_0^t f^{(r)}\left(\bar X^{(N,r)}(s,p)\right)ds\right\rvert \ge \epsilon} < \rho
	\end{align*}
	Hence, for all $N \ge N_0$,
	\begin{align*}
	 	\sup_{p \in\Delta^{(N)}} \prob{\sup_{t \le T} \lvert y^{(r)}(t,p) - \bar X^{(N,r)}(t,p)\rvert \ge \delta} < \rho
	\end{align*}
	Since $\rho$ was arbitrary, \eqref{eq:uniformboundedcontinuous} follows.
\end{proof}

The last tool, \Cref{lem:slln}, is in essence a one-sided strong law of large numbers for indicator random variables.
Instead of the usual assumption of i.i.d. random variables, it only assumes that the probability of each variable being $1$ is conditionally upper bounded.
The elegant proof was pointed out to us by an anonymous referee.
\begin{lemma}\label{lem:slln}
	Let $\alpha\in[0,1]$.
	Let $\{Z_n\colon n\in\mathbb N_0\}$ be indicator random variables and, for $n \ge 1$, $S_n = \sum_{k = 1}^n Z_k$.
	If $\prob{Z_1 = 1} \le \alpha$ and for all $n \ge 2$, $\prob{Z_n =1 \mid S_{n-1}} \le \alpha$, then
	\begin{align*}
		\prob{\limsup_{n\rightarrow\infty} \frac{S_n}n > \alpha} = 0\text.
	\end{align*}
\end{lemma}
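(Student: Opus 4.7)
My plan is to stochastically dominate the dependent sequence $\{Z_n\}$ by an i.i.d.\ Bernoulli($\alpha$) sequence and then apply the first Borel--Cantelli lemma together with the previous lemma on binomial tails. The conclusion $\prob{\limsup_{n\to\infty} S_n/n > \alpha}=0$ is equivalent to the a.s.\ bound $\limsup_n S_n/n \le \alpha$, so stochastic domination is the natural route.

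\textbf{Step 1: Coupling.} I first augment the probability space with an independent sequence $\{U_n\colon n\in\mathbb N\}$ of $\mathrm{Uniform}[0,1]$ random variables that are independent of $\{Z_n\}$. Set $q_1 = \prob{Z_1 = 1}$ and, for $n\ge 2$, let $q_n$ be a version of $\prob{Z_n=1\mid Z_1,\dots,Z_{n-1}}$; by hypothesis $q_n\le \alpha$. Define
\[
Y_n \;=\; Z_n \,\vee\, \chi_{\{U_n \le (\alpha - q_n)/(1-q_n)\}},
\]
with the convention that the second indicator is $0$ when $q_n = 1$. Clearly $Y_n\ge Z_n$, hence $T_n := \sum_{k=1}^n Y_k \ge S_n$. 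A direct computation of the conditional probability of $\{Y_n=1\}$ given the past yields $q_n + (1-q_n)\cdot\tfrac{\alpha-q_n}{1-q_n} = \alpha$, independently of $(Y_1,\dots,Y_{n-1})$. By induction on $n$, the $Y_n$ are thus i.i.d.\ Bernoulli$(\alpha)$ and $T_n \sim B(n,\alpha)$.

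\textbf{Step 2: Borel--Cantelli.} Fix $\epsilon > 0$. By \Cref{lem:binomial},
\[
\sum_{n\in\mathbb N} \prob{T_n/n > \alpha + \epsilon} \;<\; \infty.
\]
The first Borel--Cantelli lemma implies $\prob{T_n/n > \alpha+\epsilon \text{ infinitely often}} = 0$, that is, $\limsup_n T_n/n \le \alpha+\epsilon$ almost surely. Taking a countable intersection over $\epsilon=1/m$, $m\in\mathbb N$, gives $\limsup_n T_n/n \le \alpha$ almost surely.

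\textbf{Step 3: Conclusion.} Since $S_n \le T_n$ for all $n$ in the coupled space,
\[
\limsup_{n\to\infty} \frac{S_n}{n} \;\le\; \limsup_{n\to\infty} \frac{T_n}{n} \;\le\; \alpha
\]
almost surely, which gives the claim. The distribution of $\{S_n\}$ is unchanged by the augmentation, so the same conclusion holds in the original space.

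\textbf{Expected difficulty.} The only delicate point is the coupling in Step~1: one must verify that the auxiliary randomization makes each $Y_n$ conditionally Bernoulli$(\alpha)$ \emph{independently} of the past, which requires the $U_n$ to be independent of $\{Z_n\}$ as well as of each other. Once this is in place, Steps 2 and 3 are immediate from \Cref{lem:binomial} and Borel--Cantelli.
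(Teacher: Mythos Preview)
Your overall strategy---stochastic domination by $B(n,\alpha)$, then \Cref{lem:binomial} plus Borel--Cantelli---is exactly the paper's. The gap is in Step~1. You set $q_n=\prob{Z_n=1\mid Z_1,\dots,Z_{n-1}}$ and write ``by hypothesis $q_n\le\alpha$,'' but the hypothesis only bounds $\prob{Z_n=1\mid S_{n-1}}$, i.e.\ the conditional probability given the \emph{sum}, not the full history. By the tower property this gives $\ev{q_n\mid S_{n-1}}\le\alpha$, which does not force $q_n\le\alpha$ almost surely. A concrete obstruction: take $Z_1,Z_2$ i.i.d.\ Bernoulli$(1/2)$ and let $\prob{Z_3=1\mid Z_1,Z_2}$ equal $1$ on $(0,1)$, $0$ on $(1,0)$, and $1/2$ otherwise. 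Then $\prob{Z_3=1\mid S_2}=1/2$ for every value of $S_2$, so the lemma's hypothesis holds with $\alpha=1/2$, yet $q_3=1$ on the event $\{(Z_1,Z_2)=(0,1)\}$. On that event your formula $(\alpha-q_n)/(1-q_n)$ is either negative or undefined, and under your convention $Y_3=Z_3=1$, so $\prob{Y_3=1\mid\text{past}}=1\neq\alpha$. Hence the $Y_n$ are not i.i.d.\ Bernoulli$(\alpha)$ and the coupling collapses.

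The paper sidesteps this by proving the marginal stochastic dominance $\prob{S_n\ge l}\le\prob{B_n\ge l}$ directly via induction on $n$, splitting into the two cases $\prob{S_{n-1}=l-1}\lessgtr\prob{B_{n-1}=l-1}$ and using only the bound on $\prob{Z_n=1\mid S_{n-1}=l-1}$. This yields $\sum_n\prob{S_n/n\ge\alpha+\epsilon}\le\sum_n\prob{B_n/n\ge\alpha+\epsilon}<\infty$, after which your Steps~2 and~3 go through verbatim (and no pathwise coupling is needed). If you want to keep a coupling argument, you would have to strengthen the hypothesis to a bound conditional on the full history $(Z_1,\dots,Z_{n-1})$---which, incidentally, \emph{is} what holds in the paper's application---but that is not the lemma as stated.
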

\begin{proof}
	The proof uses the moment generating functions of the random variables involved.
	Let $Z$ be an indicator random variable with $\prob{Z = 1} = \alpha$.
	Then, for all $n \ge 2$ and $t > 0$,
	\begin{align*}
		\ev{e^{tS_n}} = \ev{e^{tZ_n}e^{tS_{n-1}}} \le \ev{e^{tZ}} \ev{e^{tS_{n-1}}},
	\end{align*}
	where the inequality follows from the assumption that $\prob{Z_n =1 \mid S_{n-1}} \le \alpha$ and Fubini's theorem.
	Repeated application of this inequality and the assumption $\prob{Z_1 = 1} \le \alpha$ give that for all $n \ge 1$ and $t > 0$,
	\begin{align*}
		\ev{e^{tS_n}} \le \ev{e^{tZ}}^n.
	\end{align*}
	
	Now let $\beta > \alpha$, $n\ge 1$, and $t > 0$.
	Then, by Markov's inequality,
	\begin{align*}
		\prob{S_n \ge \beta n} = \prob{e^{tS_n} \ge e^{t\beta n}} \le e^{-t\beta n} \ev{e^{tS_n}}.
	\end{align*}
	Combining the two preceding inequalities gives
	\begin{align*}
		\prob{S_n \ge \beta n} \le e^{-t\beta n}\ev{e^{tZ}}^n = \ev{e^{t(Z - \beta)}}^n.
	\end{align*}
	Since $\ev{Z - \beta} < 0$, we have for sufficiently small $t > 0$ that $\ev{e^{t(Z - \beta)}} < 1$.
	Hence, $\prob{S_n \ge \beta n}$ decays exponentially in $n$, and so
	\begin{align*}
		\sum_{n \ge 1} \prob{S_n \ge \beta n} < \infty.
	\end{align*}
	The Borel-Cantelli lemma thus gives that $\lim\sup_n \frac{S_n}n \le \beta$ almost surely.
	Since $\beta > \alpha$ was arbitrary, we conclude that $\lim\sup_n \frac{S_n}n \le \alpha$ almost surely as claimed.
\end{proof}

Putting together \Cref{lem:continuoustimeconvergence}, \Cref{lem:uniformboundedcontinuous}, and \Cref{lem:slln}, 
we show that $\bar X^{(N,r)}$ is almost surely close to $p^{(r)}$ most of the time for large enough $N$.
More precisely, for $S \subset \Delta^{(N)}$, let
\begin{align*}
	\bar s_t^{(N,r)}(S) = \frac1t \int_0^t \chi_S(\bar X^{(N,r)}(s))ds
\end{align*}
be the fraction of time $\bar X^{(N,r)}$ spends in a state in $S$.
For $\delta > 0$, we consider the fraction of time spent in $B_\delta(p^{(r)})$.
If $N$ is large enough, the limit of $\bar s_t^{(N,r)}(B_\delta(p^{(r)}))$ for $t$ to infinity is almost surely close to 1.

\begin{lemma}\label{lem:averagecongervence}
	Let $\delta,\tau > 0$.
	Then, for every $r > 0$, there is $N_0$ such that for all $N\ge N_0$ and $p_0\in\Delta^{(N)}$,
	\begin{align}
				\prob{\lim_{t\rightarrow\infty} \bar s_t^{(N,r)}(B_\delta(p^{(r)})) \ge 1-\tau} = 1
		\label{eq:averageconvergence}
	\end{align}
\end{lemma}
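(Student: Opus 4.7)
The plan is to compare $\bar X^{(N,r)}$ to the deterministic flow $y^{(r)}$ on successive time windows of fixed length $T_0$ using the uniform approximation from \Cref{lem:uniformboundedcontinuous}, and then invoke the one-sided strong law \Cref{lem:slln} to turn per-window probability bounds into an almost sure bound on the sojourn time. First, by the uniform convergence $y^{(r)}(t,p)\to p^{(r)}$ established in \Cref{lem:continuoustimeconvergence}, I choose $T_0>0$ so large that for every $p\in\Delta$ the Lebesgue measure of $\{t\in[0,T_0]\colon |y^{(r)}(t,p)-p^{(r)}|\ge \delta/2\}$ is at most $(\tau/2)T_0$. Next, \Cref{lem:uniformboundedcontinuous} applied with time horizon $T_0$ and tolerance $\delta/2$ yields $N_0$ such that for every $N\ge N_0$ and every $p\in\Delta^{(N)}$,
\[
\prob{\sup_{t\le T_0}|\bar X^{(N,r)}(t,p)-y^{(r)}(t,p)|\ge \delta/2}\le \tau/2.
\]

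I then partition $[0,\infty)$ into intervals $I_k=[(k-1)T_0,kT_0]$ (modulo a trivial alignment with the discrete clock $k/N$). Writing $\bar x_{k-1}=\bar X^{(N,r)}((k-1)T_0,p_0)$ and setting
\[
Z_k=\chi\!\left\{\sup_{t\in I_k}\bigl|\bar X^{(N,r)}(t,p_0)-y^{(r)}\!\bigl(t-(k-1)T_0,\bar x_{k-1}\bigr)\bigr|\ge \delta/2\right\},
\]
the Markov property implies that, conditional on the entire history up to time $(k-1)T_0$, the trajectory of $\bar X^{(N,r)}$ on $I_k$ has the law of $\bar X^{(N,r)}(\cdot,\bar x_{k-1})$ on $[0,T_0]$. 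The previous display applied with initial state $\bar x_{k-1}$ therefore gives $\prob{Z_k=1\mid Z_1,\ldots,Z_{k-1}}\le \tau/2$ uniformly. This is precisely the hypothesis of \Cref{lem:slln} with $\alpha=\tau/2$, which yields $\limsup_n \frac1n\sum_{k=1}^n Z_k\le \tau/2$ almost surely.

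To convert this to a statement about the sojourn time, note that whenever $Z_k=0$ the process $\bar X^{(N,r)}$ stays within $\delta/2$ of $y^{(r)}(\cdot,\bar x_{k-1})$ on all of $I_k$, and by the choice of $T_0$ the latter lies in $B_{\delta/2}(p^{(r)})$ on at least a $(1-\tau/2)$-fraction of $I_k$. Hence on every interval with $Z_k=0$, $\bar X^{(N,r)}\in B_\delta(p^{(r)})$ for at least a $(1-\tau/2)$-fraction of the time. Summing over $k=1,\ldots,n$,
\[
\bar s_{nT_0}^{(N,r)}\bigl(B_\delta(p^{(r)})\bigr)\ge \left(1-\tfrac1n\sum_{k=1}^n Z_k\right)(1-\tau/2),
\]
whose almost sure limit inferior is at least $(1-\tau/2)^2\ge 1-\tau$. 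A short interpolation, using that $\bar s^{(N,r)}$ changes by at most $|t_2-t_1|/t_1$ between times $t_1<t_2$, extends the bound from the grid $T_0\mathbb N$ to all $t\to\infty$.

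The main obstacle is producing the uniform conditional bound $\prob{Z_k=1\mid Z_1,\ldots,Z_{k-1}}\le \tau/2$; this uses crucially the uniformity in the starting state supplied by \Cref{lem:uniformboundedcontinuous}, combined with the Markov property to decouple successive windows. Minor technicalities include aligning the interval endpoints with the discrete clock $k/N$ of $\bar X^{(N,r)}$ and verifying that \Cref{lem:slln}, whose hypothesis is phrased in terms of $S_{n-1}$, applies here via the obvious tower-property argument from the full filtration.
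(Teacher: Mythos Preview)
Your proposal is correct and follows essentially the same architecture as the paper's proof: choose a window length $T_0$ on which the deterministic flow spends all but a $\tau/2$-fraction of the time in $B_{\delta/2}(p^{(r)})$, use \Cref{lem:uniformboundedcontinuous} to get a uniform-in-initial-state probability bound on each window, define the bad-window indicators $Z_k$, and apply \Cref{lem:slln}. One minor difference: the paper reaches the ``good-fraction on $[0,T_0]$'' property by combining \Cref{lem:continuoustimeconvergence} with \Cref{lem:continuoustimestaysclose} (first pull the flow into a small $\eta$-ball, then invoke stability to stay in $B_{\delta/2}$), whereas you derive it directly from the uniform convergence in \Cref{lem:continuoustimeconvergence} alone; your route is a slight simplification and is perfectly valid. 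The final bookkeeping also differs harmlessly: the paper bounds the bad-time fraction by $\tfrac\tau2 + \tfrac1n\sum Z_k$, while you bound the good-time fraction by $(1-\tfrac1n\sum Z_k)(1-\tfrac\tau2)\ge 1-\tau$.
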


\begin{proof}
	Let $r > 0$.
	By \Cref{lem:continuoustimestaysclose}, we can find $\eta > 0$ such that
	\begin{align*}
		\sup\left\{\left\lvert y^{(r)}(t,p) - p^{(r)}\right\rvert\colon t\ge 0,\, p\in B_\eta(p^{(r)})\right\} < \frac\delta2
	\end{align*}	
	By \Cref{lem:continuoustimeconvergence}, we can find $T_1 > 0$ such that for all $T \ge T_1$,
	\begin{align*}
		\sup\left\{\left\lvert y^{(r)}(T,p) - p^{(r)}\right\rvert\colon p\in\Delta\right\} < \eta
	\end{align*}
	Let $T_0 = \frac2\tau T_1$.
	Note that $y^{(r)}$ is time-invariant, that is, $y^{(r)}(t,p) = y^{(r)}(t - t_0, y^{(r)}(t_0,p))$ for all $t\ge t_0\ge 0$.
	Combining these facts, it follows that for every $p\in\Delta$, the measure of $t\in[t_0,t_0 + T_0]$ for which $y^{(r)}(t,p)$ is in an $\frac\delta2$-ball around $p^{(r)}$ is at least $(1-\frac\tau2)T_0$.
	We may assume that $T_0$ is integral.

	By \Cref{lem:uniformboundedcontinuous}, there is $N_0\in\mathbb N$ such that for all $N \ge N_0$,
	\begin{align*}
		\sup_{p \in\Delta^{(N)}} \prob{\sup_{0\le t \le T_0} \left\lvert y^{(r)}(t,p) - \bar X^{(N,r)}(t,p)\right\rvert \ge \frac\delta2} < \frac\tau2\text.
	\end{align*}
		
	Now fix $N \ge N_0$ and $p\in\Delta^{(N)}$.
	We upper bound the fraction of time $\bar X^{(N,r)}$ is further than $\delta$ away from $p^{(r)}$. 
	To simplify notation, let $t_k = kT_0$ and $\bar x_k = \bar X^{(N,r)}(t_k,p)$.
	For $n \ge 1$, we calculate the expected number of intervals $[t_{k-1},t_k]$, $1\le k\le n$ so that $\lvert\bar x(t,p) - p^{(r)}\rvert \ge \delta$ for some $t\in [t_{k-1},t_k]$.
	Let $Z^{(N,r)}_k$ be the indicator variable for the event that $\bar X^{(N,r)}(t,p)$ and $y^{(r)}(t-t_{k-1},\bar x_{k-1})$ differ by at least $\frac\delta2$ on the time interval $[t_{k-1},t_k]$ given that both start at the point $\bar x_{k-1}$ at time $t_{k-1}$.
	So $Z^{(N,r)}_k$ is 1 if $\sup_{t_{k-1} \le t \le t_k} \lvert\bar X^{(N,r)}(t,p) - y^{(r)}(t-t_{k-1},\bar x_{k-1})\rvert \ge \frac\delta2$ and 0 otherwise.
	Notice that $\{Z^{(N,r)}_k\colon k \in\mathbb N_0\}$ satisfies the hypothesis of \Cref{lem:slln} with $\alpha = \frac\tau2$.\footnote{While the probability that $Z^{(N,r)}_n$ equals 1 may depend on $Z^{(N,r)}_k$ for $k < n$, the bound of $\frac\tau2$ holds independently of the $Z^{(N,r)}_k$ since the bound obtained in \Cref{lem:uniformboundedcontinuous} is uniform in the initial state $p$.}

	If $Z^{(N,r)}_k = 0$, then
	\begin{align*}
		\int_{t_{k-1}}^{t_k}\bar\chi_{B_{\delta}(p^{(r)})}(\bar X^{(N,r)}(t,p))dt &\le\int_{t_{k-1}}^{t_k}\bar\chi_{B_{\frac\delta2}(y^{(r)}(t - t_{k-1},\bar x_{k-1}))}(\bar X^{(N,r)}(t,p))dt\\ 
		&+ \int_{t_{k-1}}^{t_k}\bar\chi_{B_{\frac\delta2}(p^{(r)})}y^{(r)}(t - t_{k-1},\bar x_{k-1}) dt\\
		&\le\frac\tau2T_0\text.
	\end{align*}
	It follows that
	\begin{align*}
		\frac1{nT_0} \sum_{k\in[n]} 		\int_{t_{k-1}}^{t_k}\bar\chi_{B_{\delta}(p^{(r)})}(\bar X^{(N,r)}(t,p))dt 
		&= \sum_{\substack{k\in[n]\\ Z^{(N,r)}_k = 0}} \frac1{nT_0} \int_{t_{k-1}}^{t_k}\bar\chi_{B_{\delta}(p^{(r)})}(\bar X^{(N,r)}(t,p))dt \\
		&+ \sum_{\substack{k\in[n]\\ Z^{(N,r)}_k = 1}} \frac1{nT_0}\int_{t_{k-1}}^{t_k}\bar\chi_{B_{\delta}(p^{(r)})}(\bar X^{(N,r)}(t,p))dt\\
		&\le \frac\tau2 + \frac1n\sum_{k = 1}^n Z^{(N,r)}_k\text.
	\end{align*}
	Applying \Cref{lem:slln} to $\{Z^{(N,r)}_k\colon k\in\mathbb N_0\}$ with $\alpha = \frac\tau2$ gives
	\begin{align*}
		\prob{\limsup_{n\rightarrow\infty} \frac1n\sum_{k=1}^n Z^{(N,r)}_k \ge \frac\tau2} = 0\text.
	\end{align*}
	Hence, with the preceding inequality we get
	\begin{align*}
		\prob{\lim_{t\rightarrow\infty}\bar s^{(N,r)}_t(B_\delta(p^{(r)}) \ge 1 - \tau} = \prob{\lim_{t\rightarrow\infty} \int_0^t \chi_{B_\delta(p^{(r)})}(\bar X^{(N,r)}) \ge 1 - \tau} = 1 
	\end{align*}
	which is \eqref{eq:averageconvergence}.
\end{proof}

\main*
\begin{proof}
	By \Cref{lem:stationary}, we can choose $r_0 > 0$ so that $p^{(r)} \in B_{\frac\delta2}(\ml(\p))$ for all $0 < r \le r_0$. 
	Let $0 < r \le r_0$ and $p^*\in\ml(\p)$ so that $\lvert p^{(r)} - p^*\rvert \le \frac\delta2$.
	Then, applying \Cref{lem:averagecongervence} to $\frac\delta2,\tau$, and $r$, we get $N_0\in\mathbb N$ such that \eqref{eq:averageconvergence} holds (with $\frac\delta2$ in place of $\delta$). 
	Hence, 
	\begin{align}
		\prob{\lim_{t\rightarrow\infty} \bar s_t^{(N,r)}(B_\delta(p^*)) \ge 1-\tau} = 1\text.
		\label{eq:almostsurelylikelyclose}
	\end{align}
	This is equivalent to the assertion in the first part of the theorem.
	
	The second statement follows by recalling the standard fact that the distribution of an irreducible and aperiodic Markov chain converges exponentially to its stationary distribution in the total variation norm \citep[][Theorem 4.9]{LPW09a}.
\end{proof}

\section{Approximate Axiomatics}\label{sec:approximateaxiomatics}

The temporal average of the urn process we describe approximates maximal lotteries. As mentioned in \Cref{sec:intro}, maximal lotteries are renowned for satisfying a large number of desirable properties and several results have shown that no other probabilistic social choice function (PSCF) satisfies these properties. 
For some axioms, it is obvious that every approximation of a PSCF that satisfies the axiom satisfies an approximate version of the axiom.
For example, every approximation of a Condorcet-consistent PSCF---one that returns the lottery with probability 1 on the Condorcet winner whenever one exists---is approximately Condorcet-consistent in the sense that it assigns probability close to 1 to Condorcet winners.
For other axioms, this is not true in general. 
In this section, we discuss population-consistency as an example and show that approximations of \emph{continuous} PSCFs that satisfy population-consistency satisfy an approximate consistency axiom.
Our notion of continuity requires that a preference change by a small fraction of the voters leads to at most a small change to the chosen lottery.
Since \ml is continuous and population-consistent, the urn procedure is approximately population-consistent. 
The same holds for several other axioms satisfied by \ml such as composition-consistency and agenda-consistency \citep[see][]{Bran13a}.
We do not discuss those here since the analysis is very similar.

The notation becomes more convenient by switching to \emph{fractional preference profiles}, which map every preference relation to the fraction of voters with these preferences. The set of all fractional preferences profiles is thus $\Delta(\mathcal R)$.
A PSCF is then a correspondence from $\Delta(\mathcal R)$ to $\Delta$.\footnote{This definition of a PSCF entails that it is anonymous (the identities of the voters are irrelevant) and homogeneous (replicating the entire electorate does not affect the outcome).}
Population-consistency is concerned with the consistency of PSCFs with respect to variable electorates: whenever a PSCF $\phi$ selects $p$ for two profiles on disjoint electorates, then it also selects $p$ for the union of both profiles. 
For fractional preference profiles, this is formalized as
\begin{align*}\textstyle
	\phi(\p') \cap \phi(\p'') \subset \phi(\frac12\,\p'+\frac12\,\p'') 
	\tag{population-consistency}
	\label{axiom:populationconsistency}
\end{align*}
for all $\p',\p''\in\Delta(\mathcal R)$.
An approximate version of population-consistency requires that $\phi(\p') \cap \phi(\p'')$ is contained in a small neighborhood of $\phi(\frac12\,\p'+\frac12\,\p'')$. 
For $\epsilon > 0$, $\phi$ satisfies $\epsilon$-approximate population-consistency at $\p\in\Delta(\mathcal R)$ if
\begin{align*}
	\phi(\p') \cap \phi(\p'') \subset B_\epsilon\left(\phi(\p)\right)
	\tag{$\epsilon$-approximate population-consistency}
	\label{axiom:approxpopulationconsistency}
\end{align*}
for all $\p',\p''\in\Delta(\mathcal R)$ with $\frac12\,\p' + \frac12\,\p'' = \p$.

There are population-consistent PSCFs and $\epsilon > 0$ such that even arbitrarily good approximations of the PSCF do not satisfy $\epsilon$-approximate population-consistency for some profile.
Call $\psi$ a $\delta$-approximation of $\phi$ for $\delta > 0$ if $\dist{\psi(\p),\phi(\p)} < \delta$ for all $\p$, where $\dist{U,V} = \inf_{p\in U,q\in V} \lvert p-q\rvert$ for $U,V\subset\Delta$.
If a PSCF is population-consistent, it may be the case that for some profile $\p$, for every $\delta > 0$, there is a $\delta$-approximation of the PSCF that violates $\epsilon$-approximate population-consistency for $\p$.
We show that this is impossible for (upper hemi-)continuous PSCFs at profiles for which a unique lottery is returned.\footnote{A PSCF $\phi$ is upper hemi-continuous if for all $\p\in\Delta(\mathcal R)$ and every sequence $(\p^k)_{k\in\mathbb N}\subset\Delta(\mathcal R)$ converging to $\p$, whenever a sequence $(p^k)_{k\in\mathbb N}\subset\Delta$ with $p^k\in\phi(\p^k)$ converges to $p\in\Delta$, then $p\in\phi(\p)$.}

\begin{restatable}{proposition}{approxpcons}\label{prop:approxpopulationconsistency}
	Let $\phi$ be a continuous PSCF that satisfies population-consistency. 
	Then, for every $\epsilon > 0$ and $\p\in\Delta(\mathcal R)$ with $\lvert \phi(\p)\rvert = 1$, there is $\delta > 0$ such that every $\delta$-approximation of $\phi$ satisfies $\epsilon$-approximate population-consistency at $\p$.
\end{restatable} 

\citet{Bran13a} have shown that $\ml$ is continuous and satisfies population-consistency.
Moreover, maximal lotteries are almost always unique in the sense that the set of all profiles with multiple maximal lotteries has measure zero and is nowhere dense in $\Delta(\mathcal R)$.\footnote{These facts follow from considering the dimension of the kernel of a sub-matrix of the skew-comparison matrix.}
\Cref{thm:main} implies that the number of balls in the urn and the mutation rate can be chosen so that the temporal average of the urn distribution is almost surely at most $\delta$ away from some maximal lottery. 
Hence, \Cref{prop:approxpopulationconsistency} implies that the urn process satisfies $\epsilon$-approximate population-consistency in a well-defined sense for arbitrary $\epsilon > 0$.

It is straightforward to check that \Cref{prop:approxpopulationconsistency} follows from the following lemma.

\begin{lemma}\label{lem:approxconvex}
	Let $\phi\colon U\rightrightarrows\mathbb R^d$ be an upper hemi-continuous correspondence from a compact subset $U\subset\mathbb R^k$ to $\mathbb R^d$
	Suppose that for all $u',u'' \in U$, $\phi(u') \cap \phi(u'') \subset \phi(\frac12\,u' + \frac12\,u'')$.
	Then, for all $u \in U$ and $\epsilon > 0$, there is $\delta > 0$ such that for all $u',u''\in U$ with $u = \frac12\,u' + \frac12\,u''$,
	\begin{align*}
		B_\delta(\phi(u')) \cap B_\delta(\phi(u'')) \subset B_\epsilon(\phi(u))\text.
	\end{align*}
\end{lemma}

\begin{proof}
	Let $u\in U$ and $\epsilon > 0$.
	Define
	\begin{align*}
		\delta' = \inf\left\{\dist{\phi(u') \setminus B_{\frac\epsilon2}(\phi(u)), \phi(u'') \setminus B_{\frac\epsilon2}(\phi(u)}\colon u',u'' \in U \text{ with } \frac12\, u' + \frac12\, u'' = u\right\}\text.
	\end{align*}
	(By convention, $\dist{\cdot,\cdot}$ takes the value $\infty$ if one of its arguments is the empty set.)
	Since $\phi$ is upper hemi-continuous, $U$ is compact, and $\phi(u') \cap \phi(u'') \subset \phi(u)$ for $u',u''$ as above, $\delta' > 0$. 
	
	Now let $\delta = \frac12\min\{\epsilon,\delta'\}$.
	Let $u',u''\in U$ with $\frac12\, u' + \frac12\, u'' = u$ and $p \in B_\delta(\phi(u')) \cap B_\delta(\phi(u''))$.
	Then there are $q'\in \phi(u')$ and $q'' \in \phi(u'')$ with $\lvert q' - p\rvert < \delta$ and $\lvert q'' - p\rvert < \delta$.
	Hence, $\lvert q' - q''\rvert < \delta'$.
	By definition of $\delta'$, either $q'\in B_{\frac\epsilon2}(\phi(u))$ or $q''\in B_{\frac\epsilon2}(\phi(u))$. 
	Either way, $p \in B_{\delta + \frac\epsilon2}(\phi(u)) \subset B_\epsilon(\phi(u))$, which is what had to be shown.
\end{proof}

\newpage

\end{document}